\documentclass[journal]{IEEEtran}

\usepackage{epsfig} 
\usepackage{mathptmx} 
\usepackage{times} 
\usepackage{pgfplots}

\usepackage{amssymb,latexsym,amsfonts,amsmath,amsthm}
\usepackage{graphicx}
\usepackage{color}
\usepackage{float}

\usepackage{psfrag}
\usepackage{pstricks}
\usepackage{pstricks-add}
\usepackage{graphics} 
\usepackage{pgfplots}
\usepackage{balance}


\usepackage[ruled]{algorithm}
\usepackage{algpseudocode}
\newcommand{\algorithmicinput}{\textbf{Input:}}
\newcommand{\algorithmicoutput}{\textbf{Output:}}
\def\Input{\item[\algorithmicinput]}
\def\Output{\item[\algorithmicoutput]}

\def\bO{\bar{\Omega}}

\def\bH{\bar{H}}
\def\bh{\bar{h}}
\def\bG{\bar{G}}
\def\bg{\bar{g}}
\def\bx{\bar{x}}
\def\ee{\hspace{-0.1cm} & \hspace{-0.1cm} }

\everymath{\displaystyle}

\newtheorem {theorem}{Theorem}
\newtheorem {lemma}{Lemma}

\newtheorem {definition}{Definition}

\newtheorem {example}{Example}
\newtheorem {assumption}{Assumption}
\newtheorem {remark}{Remark}
\newtheorem {proposition}{Proposition}



\def\R{\mathbb{R}}
\def\B{\mathbf{B}}
\def\N{\mathbb{N}}

\def\B{\mathcal{B}}

\def\co{\textrm{co}}

\def\int{\text{int} }
\def\proj{\text{proj}}


\title{\LARGE \bf Computing control invariant sets is easy}

\author{Mirko~Fiacchini, Mazen~Alamir\thanks{M. Fiacchini and M. Alamir are  
with Univ. Grenoble Alpes, CNRS, Gipsa-lab, F-38000 Grenoble, France. \newline
{\tt \{mirko.fiacchini,mazen.alamir\}@gipsa-lab.fr}\newline
This work has been submitted to the IEEE for possible publication. Copyright 
may be transferred without notice, after which this version may no longer be 
accessible.}
}


\begin{document}
\maketitle

\begin{abstract}
In this paper we consider the problem of computing control invariant sets for 
linear controlled systems with constraints on the input and on the states. We 
focus in particular on the complexity of the computation of the N-step 
operator, given by the Minkowski addition of sets, that is the basis of many of 
the iterative procedures for obtaining control invariant sets. Set inclusions 
conditions for control invariance are presented that involve the N-step sets and 
are posed in form of linear programming problems. Such conditions are employed 
in algorithms based on LP problems that allow to overcome the complexity 
limitation inherent to the set addition and can be applied also to high 
dimensional systems. The efficiency and scalability of the method are 
illustrated by computing in less than two seconds an approximation of the 
maximal control invariant set, based on the 15-step operator, for a system whose 
state and input dimensions are 20 and 10 respectively.
\end{abstract}

\section{Introduction}

Invariance and contractivity of sets are central properties in modern control 
theory. For a dynamical system, a set is invariant if the trajectories starting 
within the set remain in it. For controlled systems, if the state can be 
maintained by an admissible input in the set, then it is referred as control 
invariant. Although the first important results on invariance date back to the 
beginning of the seventies \cite{BertsekasTAC72}, this topic gained 
considerable interest in the recent years, mainly due to its relation with 
constrained control and popular optimization-based control techniques as Model 
Predictive Control. The existence of an invariant set to be imposed as terminal 
constraint is, in fact, an essential ingredient to assure recursive feasibility 
and constraints satisfaction for many classical MPC control strategies 
\cite{MayneAUT00} as well as more recent techniques \cite{AlamirAUT17}.

The study of invariance and set theory methods for control gained interest also 
thanks to the foundational works by Blanchini and coauthors 
\cite{BlanchiniTAC94,BlanchiniAUT99,BlanchiniBook}. In these works, results are 
provided that proves that the existence of polyhedral Lyapunov functions, and 
then of contractive polytopes, are necessary and sufficient for stability of 
parametric uncertain linear systems \cite{MolchanovSCL89,BlanchiniAUT95}. 
Moreover, iterative procedures are given for the computation of control 
invariant sets that permit their practical implementation. Most of those 
procedures are substantially based on the one-step backward operator that 
associates to any set the states that can be steered in it by an admissible 
input, for every possible realizations of the eventual uncertainty. Different 
algorithms based on the one-step operator exist for computing control 
invariants, that substantially differs from the initial set. For instance, if 
the algorithm are initialized with the state constraints set, 
\cite{BlanchiniTAC94,KerriganPHD01,RunggerTAC17}, the one-step operator 
generates a sequence of outer approximations of the maximal control invariant 
that converges to it under compactness assumptions, see \cite{BertsekasTAC72}. 
Nevertheless, the finite determination of the algorithm, that is ensured for 
autonomous systems \cite{Kolm98}, cannot be assured in general in presence of 
control input. If, instead, the procedure is initialized with a control 
invariant set, a non-decreasing sequence of control invariant sets are obtained 
that converges from the inside to the maximal control invariant set, see the 
considerations on minimum-time ultimate boundedness problem in 
\cite{BlanchiniBook}. A particular case of the latter approach, that needs no 
preliminary knowledge of a control invariant set, suggests to initialize the 
procedure with the set containing the origin only (which is a control invariant 
in the general framework), obtaining the sequence of $i$-step null-controllable 
sets, that are control invariant and converges to the maximal control invariant 
set, see \cite{GutTAC86,DarupCDC14}. 

Thus, although the abstract iterative procedures for obtaining control invariant 
sets apply also for nonlinear systems, and some constructive results are given 
\cite{FiacchiniAUT10,FiacchiniSCL12}, the practical computation of the one-step 
set, that is the basis for them, is often prohibitively complex for their 
application in high dimension even in the linear context. A common solution to 
circumvent this major practical issue has been fixing the sets complexity to 
get conservative but more computationally affordable results. For instance, by
considering linear feedback and ellipsoidal control invariant sets, see the 
monograph \cite{Boyd94}, or by fixing the polyhedral set complexity 
\cite{BlancoIJC10,AthanasopoulosIJC14,TahirTAC15}. 

In this paper we address the main problem related to the complexity of the 
N-step operator, for discrete-time deterministic controlled systems, with 
polyhedral constraints on the input and on the state. Considering polyhedral 
sets, such operator can be expressed in terms of Minkowski sum of polyhedra and 
then as an NP-complete problem \cite{Tiwary08Hardness}, hardly manageable in 
high dimension. An algorithm is presented for determining control invariant sets 
that is based on a set inclusion condition involving the N-step set of a 
polyhedron but does not require to explicitly compute the Minkowski sum. Such 
condition is posed as an LP feasibility problem, then solvable even in high 
dimension. Once the condition is satisfied, the control invariant set is 
given by the convex hull of several k-step sets that can be represented through 
a set of linear equalities and inequality. A second algorithm, based on the 
previous results on Minkowski sum and convex hull, is also given. The methods, 
consisting in solving LP problems, are proved to be applicable to high 
dimensional systems. Examples that show the low conservatism and the high 
scalability of the approach are provided.

\paragraph*{Notations}
Denote with $\R_+$ the set of nonnegative real numbers. Given $n \in \N$, 
define $\N_n = \{x \in \N : 1 \leq x \leq n \}$. The $i$-th element of a 
finite set of matrices or vectors is denoted as $A_i$. Using the notation from 
\cite{RockafellarVariational}, given a mapping $M : \R^n \rightrightarrows 
\R^m$, its inverse mapping is denoted $M^{-1} : \R^n \rightrightarrows \R^m$. If 
$M$ is a single-valued linear mapping, we also denote, with slight abuse of 
notation, the related matrices $M \in \R^{n \times m}$ and, if $M$ is 
invertible, $M^{-1} \in \R^{m \times n}$. Given $a \in \R^n$ and $b \in \R^m$ 
we use the notation $(a, b) = [a^T \ b^T]^T \in \R^{n+m}$. The symbol $0$ 
denotes, besides the zero, also the matrices of appropriate dimensions whose 
entries are zeros and the origin of a vectorial space, its meaning being 
determined by the context. The symbol $\mathbf{1}$ denotes the vector of entries 
$1$ and $I$ the identity matrix, their dimension is determined by the context. 
The subset of $\R^n$ containing the origin only is $\{0\}$. The symbol $\oplus$ 
denotes the Minkowski set addition, i.e. given $C, D \subseteq \R^n$ then $C 
\oplus D = \{x + y \in \R^n: \ x \in C, \ y \in D\}$. To simplify the notation, 
the propositions involving the existential quantifier in the definition of sets 
are left implicit, e.g. $\{x \in A: f(x,y) \leq 0, \ y \in B \}$ means $\{x \in 
A: \ \exists y \in B \ \mathrm{ s.t. } f(x,y) \leq 0\}$. The unit box in $\R^n$ 
is denoted $\B^n$.

\section{Problem formulation and preliminary results}

The objective of this paper is to provide a constructive method to compute a 
control invariant set for controlled linear systems with constraints on the 
input and on the state. We would like to obtain a polytopic invariant set that 
could be computed through convex optimization problems. The main aim is to 
provide a method to obtain admissible control invariant sets for 
high-dimensional systems, thus no complex computational operations are supposed 
to be allowed.

The system is given by 
\begin{equation}\label{eq:system}
 x^+ = A x + B u
\end{equation}
with constraints
\begin{equation}\label{eq:XU}
x \in X = \{y \in \R^n : \ F y\leq f\}, \quad u \in U = \{v \in \R^m : \ G v 
\leq g\}.
\end{equation}

\begin{assumption}\label{ass:XU}
The sets $X$ and $U$ are closed, convex and contain the origin. 
\end{assumption}

Note that Assumption \ref{ass:XU} implies $f \geq 0$ and $g \geq 0$. Most of 
the iterative methods for obtaining invariant sets involve the image and 
the preimage of linear mappings.

\begin{remark}\label{rem:setalgebra1}
Given a polyhedron $\Omega = \{x \in \R^n: \ H x \leq h\}$, its preimage 
through the linear single-valued mapping $A : \R^n \rightrightarrows \R^n$, 
denoted $A^{-1} \Omega$, is well defined, even if matrix $A$ is singular. 
Indeed, $A^{-1} \Omega$ is the set of $x \in \R^n$ such that $A x \in \Omega$ 
and then it is given by 
\begin{equation*}\label{eq:A-1OmegaZ}
A^{-1} \Omega = \{x \in \R^n : A x \in \Omega\} = \{x \in \R^n: \ H A x \leq 
h\}, 
\end{equation*}
while the image of $\Omega$ through $A$ is 
\begin{equation*}\label{eq:AOmegaZ}
A \Omega = \{A x \in \R^n : x \in \Omega\} = \{A x \in \R^n: \ H x \leq h\}.
\end{equation*}
Moreover, for every $\gamma \in \R$ one has 
\begin{equation*}\label{eq:gammaOmegaZ}
\begin{array}{l}
\gamma \Omega = \{\gamma x \in \R^n : x \in \Omega\} = \{\gamma x \in \R^n: \ H 
x \leq h\}\\
\end{array}
\end{equation*}
and, defining the mapping $M : \R^n \rightrightarrows \R^n$ through the matrix 
$M = \gamma I$, both the image and the preimage of $\Omega$ through $M$ are 
defined. That is $M \Omega = \gamma \Omega$ and
\begin{equation*}\label{eq:invgammaOmegaZ}
\begin{array}{l}
M^{-1} \Omega = \{x \in \R^n : \gamma x \in \Omega\} = \{x \in \R^n: \ 
\gamma H x \leq h\}.
\end{array}
\end{equation*}
Note that, also in this case, $M^{-1} \Omega$ is well defined even for $\gamma 
= 0$: $M^{-1} \Omega = \R^n$ if $0 \in \Omega$ and $M^{-1} \Omega = \emptyset$ 
if $0 \notin \Omega$.
\end{remark}

The one-step backward operator is defined as 
\begin{equation*}
\begin{array}{l}
Q(\Omega) = A^{-1} (\Omega \oplus (-B U)) = \{x \in \R^n : Ax = y - Bu, \ u \in 
U, \ \\
\hspace{1.2cm} y \in \Omega \} = \{x \in \R^n : Ax + Bu \in \Omega, \ u \in U\}
\end{array}
\end{equation*}
and provides the set of points in the state space that can be mapped into 
$\Omega$ by an admissible input with dynamics (\ref{eq:system}). 

Considering $X = \R^n$, one way to obtain a control invariant set is by 
iterating the one-step operator starting from a given initial set $\Omega$, 
compact, convex set containing the origin in its interior, and then checking 
whether the union of the sets obtained at iteration $k$ contains $\Omega$. Thus 
the sketch of the algorithm is:

\begin{algorithm}[H]
\caption{Control invariant}\label{alg:I}
\begin{algorithmic}[1]
\Input matrices $A, B$, sets $\Omega$, $U$
\State $\Omega_0 \gets \Omega$
\State $k \gets 0$
\Repeat 
\State $\Omega_{k+1} \gets A^{-1} (\Omega_k \oplus (- BU))$
\State $k \gets k+1$
\Until {$\displaystyle \Omega \subseteq \co \left(\bigcup_{j = 1}^{k} 
\Omega_j\right)$}
\State $N \gets k$
\Output $\displaystyle \Omega_{\infty} \gets \co \left(\bigcup_{k = 1}^{N} 
\Omega_k \right)$
\end{algorithmic}
\end{algorithm}

In practice, a bound on the maximal number of iteration should be imposed to 
avoid an infinite loop. Considering the alternative, direct, definition of 
$\Omega_k$
\begin{equation}\label{eq:Ok}
\begin{array}{l}
\Omega_k = A^{-k} \left( \Omega \oplus \bigoplus_{i = 0}^{k-1} (-A^i B U ) 
\right) \\
= \{x \in \R^n : \ A^kx + \sum_{i = 0}^{k-1} A^i B u_{i+1} \in \Omega, \ u_i 
\in U \ \forall i \in \N_k\},
\end{array}
\end{equation}
the algorithm above reduces to search, given $\Omega$, for the minimal $N$ such 
that 
\begin{equation}\label{eq:invariance_cond1}
\Omega \subseteq \co \left(\bigcup_{k = 
1}^{N} \Omega_k\right)= \co \left(\bigcup_{k = 1}^{N} A^{-k} \left( \Omega 
\oplus \bigoplus_{i = 0}^{k-1} (-A^i B U ) \right)\right). 
\end{equation}
As a matter of fact, all the $N$ for which (\ref{eq:invariance_cond1}) holds, 
lead to a control invariant set. Moreover, if (\ref{eq:invariance_cond1}) is 
satisfied, then it is satisfied for every $K \geq N$, leading to a 
non-decreasing sequence of nested control invariant sets.

Thus, the algorithm computes the preimages of $\Omega$ until the stop condition 
(\ref{eq:invariance_cond1}) holds. Then all the states in $\Omega_{\infty}$ 
defined 
\begin{equation}\label{eq:Omegainfty}
\Omega_{\infty} = \co \left(\bigcup_{k = 1}^{N} \Omega_k \right)
\end{equation}
can be steered in $\Omega$, thus in $\Omega_{\infty}$ itself, in $N$ steps at 
most, by means of admissible controls, as proved in the following proposition. 

\begin{proposition}
Given $\Omega$ and $\Omega_j$ as defined in (\ref{eq:Ok}) if condition 
(\ref{eq:invariance_cond1}) holds for $k \in \N$ then the set $\Omega_\infty$ 
defined in (\ref{eq:Omegainfty}) is control invariant for the system 
(\ref{eq:system}) under the constraint $u \in U$.
\end{proposition}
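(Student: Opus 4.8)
The plan is to verify the definition of control invariance directly: given an arbitrary $x \in \Omega_\infty$, exhibit an admissible input $u \in U$ such that $Ax + Bu \in \Omega_\infty$.

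The starting observation is the index-shift identity $\Omega_{k+1} = Q(\Omega_k)$ for every $k \geq 0$, with the convention $\Omega_0 = \Omega$. This is read off from the second (forward) description of $\Omega_k$ in (\ref{eq:Ok}): a point $x$ lies in $\Omega_{k+1}$ iff there is an admissible sequence steering $A^{k+1}x$ back to $\Omega$ in $k+1$ steps, iff some admissible first move $u$ places $Ax + Bu$ in the set of states steerable to $\Omega$ in $k$ steps, that is, in $\Omega_k$; this matches $x \in Q(\Omega_k)$, and the argument requires no inversion of $A$. In particular $\Omega_1 = Q(\Omega)$, and by the hypothesis (\ref{eq:invariance_cond1}) one has $\Omega \subseteq \co\big(\bigcup_{k=1}^N \Omega_k\big) = \Omega_\infty$. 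Consequently: (i) every $x \in \Omega_1$ admits $u \in U$ with $Ax + Bu \in \Omega \subseteq \Omega_\infty$; and (ii) for $k \in \{1,\dots,N-1\}$, every $x \in \Omega_{k+1}$ admits $u \in U$ with $Ax + Bu \in \Omega_k \subseteq \bigcup_{j=1}^N \Omega_j \subseteq \Omega_\infty$. Thus every point of $\bigcup_{k=1}^N \Omega_k$ can be mapped into $\Omega_\infty$ by an admissible control.

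Now take $x \in \Omega_\infty$. By definition of the convex hull of a union, $x = \sum_\ell \lambda_\ell x_\ell$ for a finite index set, with $\lambda_\ell \geq 0$, $\sum_\ell \lambda_\ell = 1$, and $x_\ell \in \Omega_{k_\ell}$ for some $k_\ell \in \{1,\dots,N\}$. For each $\ell$ pick, as above, $u_\ell \in U$ with $y_\ell := Ax_\ell + Bu_\ell \in \Omega_\infty$, and set $u := \sum_\ell \lambda_\ell u_\ell$. Since $U$ is convex by Assumption \ref{ass:XU}, $u \in U$; and by linearity of (\ref{eq:system}), $Ax + Bu = \sum_\ell \lambda_\ell(Ax_\ell + Bu_\ell) = \sum_\ell \lambda_\ell y_\ell$, a convex combination of points of $\Omega_\infty$. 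As $\Omega_\infty$ is convex by construction (it is a convex hull), $Ax + Bu \in \Omega_\infty$. Since $x$ was arbitrary, $\Omega_\infty$ is control invariant for (\ref{eq:system}) under $u \in U$.

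The only delicate point — and it is mild — is the interaction between the convex-hull-of-a-union structure of $\Omega_\infty$ and the affine dynamics: one cannot in general steer $x$ with a single precomputed control, but must select an admissible control for each constituent $x_\ell$ and then average. This is legitimate precisely because $U$ is convex (so the averaged control is admissible) and $\Omega_\infty$ is convex (so the averaged successor state lies in $\Omega_\infty$). The remaining ingredients — the identity $\Omega_{k+1} = Q(\Omega_k)$ and the absorption of the base case $\Omega_1 \to \Omega$ into $\Omega_\infty$ through (\ref{eq:invariance_cond1}) — are routine.
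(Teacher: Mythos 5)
Your proof is correct and follows essentially the same route as the paper's: decompose $x \in \Omega_\infty$ as a convex combination of points $x_\ell \in \Omega_{k_\ell}$, steer each one step back into $\Omega_{k_\ell - 1}$ (with $\Omega_0 = \Omega$, absorbed into $\Omega_\infty$ via condition (\ref{eq:invariance_cond1})), and average the controls using convexity of $U$ and of $\Omega_\infty$. The only cosmetic difference is that you make the identity $\Omega_{k+1} = Q(\Omega_k)$ and the convexity of $\Omega_\infty$ explicit, which the paper leaves implicit.
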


\begin{proof}
Given $x \in \Omega_{\infty}$ we prove that there exists $u \in U$ such that 
$Ax+Bu \in \Omega_{\infty}$. From the definition (\ref{eq:Omegainfty}) of 
$\Omega_{\infty}$, $x \in \Omega_{\infty}$ implies the existence of $x_k \in 
\Omega_k$ and $\lambda_k \geq 0$, with $k \in \N_N$, such that $x = \textstyle 
\sum_{k = 1}^N \lambda_k x_k$ and $\textstyle \sum_{k = 1}^N \lambda_k = 1$. 
Moreover, by definition of $\Omega_k$, for every $y \in \Omega_k$ there exists 
$u_k(y) \in U$ such that $Ay+Bu_k(y) \in \Omega_{k-1}$, for all $k \in \N_N$ 
(and with $\Omega_0 = \Omega$). Then denoting $u_k = u_k(x_k)$ and defining 
$u(x) = \textstyle \sum_{k = 1}^N \lambda_k u_k$, one has that $u(x) \in U$ from 
convexity of $U$, and 
\begin{equation*}
\begin{array}{l}
A x + Bu(x) = A \sum_{k = 1}^N \lambda_k x_k + B \sum_{k = 1}^N \lambda_k u_k \\
\ \ = \sum_{k = 1}^N \lambda_k \left( A x_k + B u_k \right) \in \co \left( 
\bigcup_{k = 1}^{N-1} \Omega_k \cup \Omega\right) \subseteq 
\Omega_{\infty}
\end{array}
\end{equation*}
from condition (\ref{eq:invariance_cond1}). 
\end{proof}

This means that the set given by (\ref{eq:Omegainfty}) is control invariant, in 
the absence of state constraints, if (\ref{eq:invariance_cond1}) is satisfied. 

To take into account the constraints on the state $x \in X$, recall that, under 
Assumption \ref{ass:XU}, if $\Omega$ is a control invariant set, then also 
$\alpha \Omega$ is a control invariant set, in absence of state constraints. 
Thus a first method would consists, given a control invariant set 
$\Omega_{\infty}$ in absence of state constraints, in computing the greatest 
$\alpha \in [0, \, 1]$ such that $\alpha \Omega_{\infty} \subseteq X$. This 
method, together with a less conservative one which takes explicitly into 
account $X$ in the computation of $\Omega_{\infty}$, are illustrated in 
Section~\ref{sec:state_constr}. Both methods are based on the results valid in 
absence of state constraints.

\begin{remark}
The algorithm sketched above is not the standard one for obtaining a control 
invariant set. Usually, in fact, one should start with $\Omega = X$ and 
intersect the preimages with $X$ at every iteration and then check if the 
inclusion $\Omega_k \subseteq \Omega_{k+1}$ holds, see \cite{BlanchiniBook}. 
This approach provides a sequence of non-increasing nested sets that are outer 
approximations of the maximal control invariant set and whose intersection 
converges to it, if $X$ and $U$ are compact, see \cite{BertsekasTAC72}. 
Unfortunately, nevertheless, the maximal control invariant set is in general 
not finitely determined and the sets generated by the iteration are not control 
invariant. An alternative, related to the approach presented here, is to start 
with $\Omega$ that is already control invariant, which leads to a 
non-decreasing sequence of nested control invariant sets. The algorithm 
presented here has the benefit of not requiring the \emph{a priori} knowledge 
of a control invariant set $\Omega$, but, on the other hand, does not assure 
that the stop condition is satisfied at some iteration for a given $\Omega$. A 
scaling procedure will be employed in order to guarantee that the stop condition 
holds.
\end{remark}

Given the initial set $\Omega$, an alternative condition characterizing an 
invariant set is the following 
\begin{equation}\label{eq:invariance_condN}
\Omega \subseteq A^{-N} \left( \Omega \oplus \bigoplus_{i = 0}^{N-1} (-A^i B U 
) \right),
\end{equation}
which is equivalent to the fact that every state in $\Omega_N$ can be steered 
in $\Omega$ in exactly $N$ steps. 

This means that (\ref{eq:invariance_condN}) implies, but is not equivalent to, 
(\ref{eq:invariance_cond1}) and the resulting invariant set would be 
$\Omega_{\infty}$ as in (\ref{eq:Omegainfty}). Condition 
(\ref{eq:invariance_condN}), which will be referred to as N-step condition in 
what follows, is just sufficient for (\ref{eq:invariance_cond1}) to hold but it 
does not require the computation of the convex hull of several sets at every 
iteration. The related algorithm follows, in which the N-step condition and the 
explicit representation of $\Omega_k$ (\ref{eq:Ok}) have been used.

\begin{algorithm}[H]
\caption{N-step condition control invariant}\label{alg:N}
\begin{algorithmic}[1]
\Input matrices $A, B$, sets $\Omega$, $U$
\State $k \gets 0$
\Repeat 
\State $k \gets k+1$
\Until {$\displaystyle \Omega \subseteq A^{-k} \left( \Omega \oplus 
\bigoplus_{i = 0}^{k-1} (-A^i B U ) \right)$}
\State $N \gets k$
\Output $\displaystyle \Omega_{\infty} \gets \co \left(\bigcup_{k = 1}^{N} 
\Omega_k \right)$
\end{algorithmic}
\end{algorithm}

The main issue which impedes the application of both algorithms in high 
dimension is the fact that computing the Minkowski set addition is a complex 
operation, as it is an NP-complete problem, see \cite{Tiwary08Hardness}. 
Moreover the addition leads to sets whose representation complexity increases. 
Considering, in fact, two polytopic sets $\Omega$ and $\Delta$, their sum has 
in general more facets and vertices those of $\Omega$ and $\Delta$. Thus, the 
algorithm given above requires the computation of the Minkowski sum, hardly 
manageable in high dimension, and generates polytope with an increasing number 
of facets and vertices. Another source of complexity is the convex hull in 
(\ref{eq:invariance_cond1}) or (\ref{eq:Omegainfty}), as the explicit 
computation of the convex hull is a non-convex operation whose complexity grows 
exponentially with the dimension, see \cite{DeBerg2000computational}.

The main objective of this paper is to design a method for testing conditions 
(\ref{eq:invariance_cond1}) and (\ref{eq:invariance_condN}) by means of convex 
optimization problems, then applicable also to relatively high dimensional 
systems, for obtaining a control invariant set.

\section{N-step condition for control invariance}

As noticed above, a first main issue is related to check whether the sum of 
several polytopes contains a polytope, see the $N$-step stop condition 
(\ref{eq:invariance_condN}). Then, also the fact that the convex hull 
computation could be required, as in condition (\ref{eq:invariance_cond1}), 
would introduce additional complexity. We consider first the $N$-step stop 
condition used in Algorithm~\ref{alg:N} and the computation of the induced 
control invariant $\Omega_{\infty}$. The stop condition 
(\ref{eq:invariance_cond1}) of Algorithm~\ref{alg:I} is based on these results 
and will be illustrated afterward.

\subsection{Minkowski sum and inclusion}
Consider the N-step condition (\ref{eq:invariance_condN}), characterized by the 
Minkowski sum of several sets. The explicit definition of the Minkowski sum of 
sets could be avoided by employing its implicit representation. Indeed, given 
two polyhedral sets $\Gamma = \{x \in \R^m : H x \in h\}$ and $\Delta = \{y \in 
\R^p: G y \leq g\}$ and $P \in \R^{n \times m}$ and $Q \in \R^{n \times p}$ we 
have that $P \Gamma \oplus T \Delta = \{x \in \R^n: x = P y + T z, \ Hy \leq h, 
\ Gz\leq g \}$. Thus, the explicit hyperplane or vertex representation of the 
sum can be replaced by the implicit one, given by the projection of a polyhedron 
in higher dimension. On the other hand, one might wonder if the stop condition 
$\Omega \subseteq \Omega_{N}$ could be checked without the explicit 
representation of $\Omega_{N}$. 

The first remark to do is that the inclusion condition is testable through a 
set of LP problems provided the vertices of $\Omega$ are available. Such an 
assumption is not very restrictive, since $\Omega$ is a design parameter that 
could be determined such that both the hyperplane and vertices representation 
should be available, a box for instance. Nevertheless, and since we are aiming 
at invariant sets for high dimensional systems, the use of vertices should be 
avoided if possible. Consider for instance, in fact, a system with $n = 20$. The 
unit box in $\R^{20}$ is characterized by 40 hyperplanes, but it has $2^{20} 
\simeq 10^6$ vertices. Then checking if it is contained in a set could require 
to solve more than a million of LP problems.

We consider then the possibility of testing whether a polyhedron is included in 
the sum of polyhedra by employing only their hyperplane representations and 
without the explicit representation of the sum of sets. The following result, 
based on the Farkas lemma and widely used on set theory and invariant methods 
for control, is useful for this purpose.

\begin{lemma}\label{lem:Farkas}
Two polyhedral sets $\Gamma = \{x \in \R^n : H x \leq h\}$, with $F \in 
\R^{p \times n}$, and $\Delta = \{x \in \R^n: G x \leq g\}$, with $G \in 
\R^{q \times n}$, satisfy $\Gamma \subseteq \Delta$ if and only if there 
exists a non-negative matrix $T \in \R^{q \times p}$ such that
\begin{equation*}
\begin{array}{l}
T H = G,\\
T h \leq g.
\end{array}
\end{equation*}
\end{lemma}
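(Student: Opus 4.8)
The plan is to prove the two directions of the equivalence separately, with the nontrivial direction relying on Farkas' lemma (or equivalently LP duality).

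\textbf{Sufficiency.} First I would assume that a non-negative $T \in \R^{q \times p}$ exists with $TH = G$ and $Th \leq g$, and show $\Gamma \subseteq \Delta$. Take any $x \in \Gamma$, so $Hx \leq h$. Since $T \geq 0$, multiplying on the left preserves the inequality: $THx \leq Th$. Using $TH = G$ and $Th \leq g$ gives $Gx = THx \leq Th \leq g$, hence $x \in \Delta$. This direction is a one-line computation and presents no obstacle.

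\textbf{Necessity.} This is the substantive part. Assume $\Gamma \subseteq \Delta$; I need to produce the certificate matrix $T$. The key observation is that $\Gamma \subseteq \Delta$ holds if and only if for every row $g_j^T$ of $G$ (with corresponding right-hand side $g_j$) we have $g_j^T x \leq g_j$ for all $x$ satisfying $Hx \leq h$, i.e.
\begin{equation*}
\max_{x} \{ g_j^T x : Hx \leq h \} \leq g_j, \quad j = 1, \dots, q.
\end{equation*}
For each such $j$ I would apply Farkas' lemma / LP duality to the linear program $\max \{ g_j^T x : Hx \leq h\}$. One subtlety is whether this LP is feasible and bounded: if $\Gamma = \emptyset$ then the inclusion is trivial and one must handle it via the inconsistent-system version of Farkas (there is a non-negative combination of the rows of $H$ equalling $0$ on the left with a negative right-hand side, from which any $T$, e.g. built by scaling, works); if $\Gamma \neq \emptyset$ then the inclusion forces the objective to be bounded above by $g_j$, so by strong LP duality there exists a non-negative vector $t_j \in \R^p$ with $H^T t_j = g_j$ and $h^T t_j \leq g_j$. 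Stacking the rows $t_j^T$ into the matrix $T$ yields exactly $TH = G$ and $Th \leq g$ with $T \geq 0$.

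\textbf{Main obstacle.} The only real care needed is the bookkeeping of the degenerate cases — $\Gamma$ empty, or some components unbounded in a direction where $g_j^T$ is zero — and stating the version of Farkas' lemma in the precise "either feasible with this dual solution, or this alternative holds" form so that the certificate $T$ is obtained uniformly. I would likely just cite Farkas' lemma in the affine-inequality form (a system $Hx \leq h$ implies $g^T x \leq \delta$ iff $g^T = t^T H$, $t^T h \leq \delta$, $t \geq 0$ for some $t \geq 0$), which already packages these cases, and apply it row by row. Everything else is routine matrix algebra.
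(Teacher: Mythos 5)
The paper states this lemma without proof (it is cited as a standard consequence of Farkas' lemma from the set-theoretic control literature), so there is no in-paper argument to compare against; your route — the trivial left-multiplication for sufficiency, and row-by-row LP duality applied to $\max\{g_j^Tx : Hx\le h\}$ for necessity — is exactly the standard proof, and it is correct whenever $\Gamma\neq\emptyset$.

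The one genuine problem is your treatment of the degenerate case $\Gamma=\emptyset$. You claim that the inconsistency certificate ($\lambda\ge 0$ with $\lambda^TH=0$, $\lambda^Th<0$) lets you build a valid $T$ "by scaling", but that certificate only lets you drive the right-hand side $Th$ down; it does nothing to help you hit the target $TH=G$, which requires each row $g_j^T$ to lie in the cone generated by the rows of $H$. Concretely, take $n=2$, $H=\left[\begin{smallmatrix}1&0\\-1&0\end{smallmatrix}\right]$, $h=(0,-1)$, so $\Gamma=\emptyset$, and $G=[0\ \ 1]$, $g=0$. Then $\Gamma\subseteq\Delta$ trivially, but $TH=G$ forces $0=1$ in the second coordinate, so no certificate exists and the "only if" direction of the lemma is simply false for empty $\Gamma$. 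The affine form of Farkas' lemma you invoke at the end does \emph{not} package this case — it explicitly assumes feasibility of $Hx\le h$. The correct repair is not a cleverer construction of $T$ but an added hypothesis $\Gamma\neq\emptyset$, which is harmless in this paper since every set to which the lemma is applied contains the origin (Assumption~\ref{ass:XU} and the hypotheses $0\in\Omega$, $0\in U$). With that hypothesis added, your argument is complete.
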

Consider now the stop condition (\ref{eq:invariance_condN}), which is suitable 
for applying the Lemma \ref{lem:Farkas}, as illustrated below. 

\begin{remark}\label{eq:setalgebra2}
The right-hand side term in (\ref{eq:invariance_condN}) cannot be expressed 
directly as the Minkowski sum of several sets, unless $A$ is nonsingular. In 
fact, given $A \in \R^{n \times n}$ with $\det(A) \neq 0$ and $\Gamma, 
\Delta \subseteq \R^n$ then the matrix $A^{-1}$ is defined and thus
\begin{equation}\label{eq:A-1OmegaB}
A^{-1} \Omega = \{x \in \R^n: \ A x \in \Omega\} = \{A^{-1} x \in \R^n : x \in 
\Omega\}, 
\end{equation}
which implies that 
\begin{equation*}
 \begin{array}{l}
A^{-1}(\Gamma \oplus \Delta) = \{x \in \R^n : A x \in \Gamma \oplus 
\Delta\} = \{x \in \R^n : A x = \\
\hspace{0.2cm} y + z, \ y \in \Gamma, \ z \in \Delta\} = \{x \in \R^n : {x} = 
A^{-1} y + A^{-1} z, \\
\hspace{0.2cm} y \in \Gamma, \ \ z \in \Delta\} = \{A^{-1} y + A^{-1} z \in \R^n 
: \ y \in \Gamma, \ \ z \in \Delta\} \\
\hspace{0.2cm} = \{y + z \in \R^n : \ A y \in \Gamma, \ \ A z \in \Delta\} = 
A^{-1}\Gamma \oplus A^{-1} \Delta,
 \end{array}
\end{equation*}
since the matrix $A^{-1}$ exists. On the contrary, if $\det(A) = 0$ then we 
have that $A^{-1}(\Gamma \oplus \Delta) \neq A^{-1}\Gamma \oplus A^{-1} 
\Delta$ in general. Indeed, considering for instance 
\begin{equation*}
\begin{array}{l}
 \Gamma = \{x \in \R^2: \ 1 \leq x_{(1)} \leq 2, \quad -1 \leq x_{(2)} \leq 
1\},\\
 \Delta = \{x \in \R^2: \ -3 \leq x_{(1)} \leq -1, \ \ -1 \leq x_{(2)} \leq 
1\},\\
\end{array}
\end{equation*}
and $A = \left[\begin{array}{cc} 0 & 0\\ 0 & 1 \end{array}\right]$, it follows 
that $ A^{-1} \Gamma = A^{-1} \Delta = \emptyset $ but 
\begin{equation*}
\begin{array}{l}
 \Gamma \oplus \Delta = \{x \in \R^2: \ -2 \leq x_{(1)} \leq 1, \quad -2 
\leq x_{(2)} \leq 2\},\\
A^{-1} (\Gamma \oplus \Delta ) = \{x \in \R^2: \ -2 \leq x_{(2)} \leq 2\}.
\end{array}
\end{equation*}
\end{remark}

The main issue for applying Lemma \ref{lem:Farkas} is the fact that obtaining 
the explicit hyperplane representation of the set at right-hand side of 
(\ref{eq:invariance_condN}) is numerically hardly affordable, mainly in 
relatively high dimension. In fact, given two polyhedra $\Gamma \subseteq \R^m$ 
and $\Delta \subseteq \R^p$, to determine $L$ and $l$ such that $P \Gamma \oplus 
Q \Delta = \{x \in \R^n : \ L x \leq l\}$ is an NP-complete problem, see 
\cite{Tiwary08Hardness}. Nevertheless, a sufficient condition in form of LP 
feasibility problem is given below for testing if a polyhedral set $\Omega$ is 
contained in $P \Gamma \oplus Q \Delta$.

\begin{proposition}\label{prop:Mink}
Consider the sets $\Omega = \{x \in \R^n : H x \leq h\}$, $\Gamma = \{y \in \R^m 
: F y \leq f\}$, $\Delta = \{z \in \R^p : G z \leq g\}$ and with $H \in \R^{n_h 
\times n}$, $F \in \R^{n_f \times m}, G \in \R^{n_g \times m}$ and the matrices 
$P \in \R^{n \times m}$ and $Q \in \R^{n \times p}$. Then $\Omega \subseteq P 
\Gamma \oplus Q \Delta$ if there exist $T \in \R^{n_{\bar{g}} \times 
n_{\bar{h}}}$ and $M \in \R^{(n+m+p) \times (n+m+p)}$, with $n_{\bg} = 2n + n_f 
+ n_G$ and $n_{\bh} = n_h + 2m + 2p$ such that 
\begin{equation}\label{eq:FinalCond}
\left\{\begin{array}{l}
T \bH = \bG M\\
T \bh \leq \bg\\
\left[\begin{array}{ccc} I \ee 0 \ee 0 \end{array}\right] =  
\left[\begin{array}{ccc} I \ee 0 \ee 0 \end{array}\right] M
\end{array}\right.
\end{equation}
holds with 
\begin{equation}\label{eq:bG}
\bG = \left[\begin{array}{ccc}
I & -P & -Q\\
-I & P & Q\\
0 & F & 0\\
0 & 0 & G
\end{array}\right] \in \R^{n_{\bg} \times (n+m+p)} \quad 
\bg = \left[\begin{array}{c}
0\\
0\\
f\\
g
\end{array}\right]\in \R^{n_{\bg}}.
\end{equation}
and 
\begin{equation}\label{eq:bH}
\bH = \left[\begin{array}{ccc}
H & 0 & 0\\
0 & I & 0\\
0 & -I & 0\\
0 & 0 & I\\
0 & 0 & -I
\end{array}\right] \in \R^{n_{\bh} \times (n+m+p)} \quad \bh = 
\left[\begin{array}{c}
h\\
0\\
0\\
0\\
0
\end{array}\right] \in \R^{n_{\bh}}.
\end{equation}
\end{proposition}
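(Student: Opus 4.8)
The plan is to interpret condition (\ref{eq:FinalCond}) as the statement that there is a linear map $M$ on $\R^{n+m+p}$ which is the identity on the $x$-coordinate and which sends the lifted copy of $\Omega$ into the polyhedron whose $x$-projection is $P\Gamma\oplus Q\Delta$; once this is set up, every $x\in\Omega$ automatically acquires an admissible completion $(y,z)$. Concretely, I would first read off the block structures of (\ref{eq:bG}) and (\ref{eq:bH}). The two mutually opposite blocks of $\bG$ impose the equality $x=Py+Qz$, and its remaining two blocks impose $y\in\Gamma$, $z\in\Delta$, so that, writing $\bO=\{w\in\R^{n+m+p}:\bG w\leq\bg\}$,
\[
\bO=\{(x,y,z):x=Py+Qz,\ y\in\Gamma,\ z\in\Delta\},\qquad \proj_{\R^n}(\bO)=P\Gamma\oplus Q\Delta .
\]
Likewise the two pairs of opposite blocks of $\bH$ pin $y$ and $z$ to $0$ while its first block keeps $Hx\leq h$, so $\hO:=\{w:\bH w\leq\bh\}=\Omega\times\{0\}^m\times\{0\}^p$. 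Finally, the third equation of (\ref{eq:FinalCond}) forces the first $n$ rows of $M$ to coincide with $[I\ 0\ 0]$, i.e. $Mw$ and $w$ carry the same $x$-block for every $w$.

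Next I would handle the first two equations of (\ref{eq:FinalCond}). With $T$ entrywise nonnegative, the pair $T\bH=\bG M$, $T\bh\leq\bg$ is a Farkas certificate for the inclusion $\hO\subseteq\{w:\bG M w\leq\bg\}$; only the elementary (``if'') implication of Lemma \ref{lem:Farkas} is needed, namely that for any $w$ with $\bH w\leq\bh$ one gets, on left-multiplying by $T\geq0$, $\bG(Mw)=T\bH w\leq T\bh\leq\bg$. Since $\{w:\bG M w\leq\bg\}=\{w:Mw\in\bO\}$, this says exactly $M\hO\subseteq\bO$.

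It then remains to combine the two facts. Take any $x\in\Omega$; then $w:=(x,0,0)\in\hO$, hence $Mw\in\bO$, and since $M$ is the identity on the $x$-block we may write $Mw=(x,\tilde y,\tilde z)$ for some $\tilde y\in\R^m$, $\tilde z\in\R^p$. Unfolding $Mw\in\bO$ gives $x=P\tilde y+Q\tilde z$ with $\tilde y\in\Gamma$ and $\tilde z\in\Delta$, i.e. $x\in P\Gamma\oplus Q\Delta$; as $x\in\Omega$ was arbitrary this yields $\Omega\subseteq P\Gamma\oplus Q\Delta$. I expect the main difficulty to be conceptual rather than computational: Lemma \ref{lem:Farkas} cannot be applied to $\Omega\subseteq P\Gamma\oplus Q\Delta$ directly, because the right-hand side is a projection of a higher-dimensional polyhedron and so hides an existential quantifier over $(y,z)$. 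Interposing a linear section $M$ that is forced to be the identity on the $x$-coordinate is precisely what converts that existential statement into a genuine polyhedral inclusion $M\hO\subseteq\bO$ in $\R^{n+m+p}$, to which Lemma \ref{lem:Farkas} does apply — at the price of obtaining only a sufficient condition, consistent with the ``if'' in the statement of the proposition.
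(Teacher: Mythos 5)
Your proof is correct and takes essentially the same approach as the paper: the same lifting of $\Omega$ and of $P \Gamma \oplus Q \Delta$ to polyhedra in $\R^{n+m+p}$, the same use of the (easy direction of the) Farkas certificate $(T,M)$ from Lemma~\ref{lem:Farkas}, and the same role for the constraint $[I \ 0 \ 0] = [I \ 0 \ 0]M$. The only cosmetic difference is that you certify the forward inclusion $M\left(\Omega \times \{0\} \times \{0\}\right) \subseteq \Omega_{\oplus}$ and conclude pointwise, whereas the paper states the equivalent preimage inclusion $\Omega \times \{0\} \times \{0\} \subseteq M^{-1}\Omega_{\oplus}$ and then projects.
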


\begin{proof}
The Minkowski sum of $P \Gamma$ and $Q \Delta$ has an implicit hyperplane 
representation given by 
\begin{equation*}
P \Gamma \oplus Q \Delta = \{x \in \R^n : \ x = P y + Q z, \ y \in \Gamma, \ z 
\in \Delta \} \subseteq \R^n
\end{equation*}
which is equivalent to the projection on $\R^n$ of a polyhedron in 
$\R^{n+m+p}$, that is 
\begin{equation}\label{eq:PO1QO2}
P \Gamma \oplus Q \Delta = \proj_x \ \Omega_{\oplus}
\end{equation}
where $\proj_x$ is the projection on the subspace of $x$, i.e. $\proj_x \ 
\Omega_{\oplus} = [I \ 0 \ 0] \Omega_{\oplus}$, and 
\begin{equation*}\label{eq:OmegaplusZ}
\begin{array}{rl}
\Omega_{\oplus} \hspace{-0.2cm} & = \{(x, y, z) \in \R^{n+m+p}: \ x = P y + Q 
z, \ F y \leq f, \ G z \leq g\}\\
& = \{\bx \in \R^{n+m+p}: \ \bG \bx \leq \bg\} \subseteq \R^{n+m+p},
\end{array}
\end{equation*}
with $\bx = (x, y, z) \in \R^{n+m+p}$ and $\bG, \bg$ as in (\ref{eq:bG}). Thus, 
to prove that $\Omega \subseteq P \Gamma \oplus Q \Delta$ without computing the 
hyperplane representation of the set $P \Gamma \oplus Q \Delta$ is equivalent to 
check whether the projection of $\Omega_{\oplus}$ on $\R^n$ contains $\Omega 
\subseteq \R^n$. This is equivalent to consider the set 
\begin{equation*}\label{eq:Omegainfty0Z}
\begin{array}{rl}
\bar{\Omega} \hspace{-0.2cm} & = \Omega \times \{0\} \times \{0\} = \{(x, 
y, z) \in \R^{n+m+p}: \ H x \leq h, \\
& y = 0, \ z = 0\} = \{\bx \in \R^{n+m+p}: \ \bH \bx \leq \bh\} \subseteq 
\R^{n+m+p}
\end{array}
\end{equation*}
with $\bx = (x, y, z)$ and $\bH, \bh$ as in (\ref{eq:bH}), and test if 
\begin{equation}\label{eq:projCNS}
\proj_x \bar{\Omega} \subseteq \proj_x \Omega_{\oplus},
\end{equation}
since $\Omega = \proj_x \bar{\Omega}$ and from (\ref{eq:PO1QO2}). Unfortunately, 
condition (\ref{eq:projCNS}) is not suitable for using Lemma \ref{lem:Farkas} 
and then we search for a sufficient condition for (\ref{eq:projCNS}) to hold 
such that the lemma can be applied directly. 

Consider any linear single-valued mapping $M : \R^{n+m+p} \rightrightarrows 
\R^{n+m+p}$, characterized by a, possibly non-invertible, matrix $M \in 
\R^{(n+m+p) \times (n+m+p)}$, such that the value of $x$ through $M$ is 
preserved, i.e. $\proj_x M((x, y, z)) = x$ for all $(x, y, z) \in \R^{n+m+p}$. 
Clearly, the value of $x$ is preserved also through the inverse mapping of $M$, 
that is $\proj_x M^{-1}((x, y, z)) = x$ for all $(x, y, z) \in \R^{n+m+p}$. 
This means that $\proj_x \Omega_{\oplus} = \proj_x M^{-1} \Omega_{\oplus}$ and 
then (\ref{eq:projCNS}) is equivalent to 
\begin{equation}\label{eq:projCNS2}
\proj_x \bar{\Omega} \subseteq \proj_x M^{-1} \Omega_{\oplus}.
\end{equation}
Then, the existence of $M$ preserving the $x$ and such that 
\begin{equation}\label{eq:projCNS3}
\bar{\Omega} \subseteq M^{-1} \Omega_{\oplus}
\end{equation}
holds, is a sufficient condition for (\ref{eq:projCNS2}), and thus also for 
(\ref{eq:projCNS}), to be satisfied. Notice that necessity of 
(\ref{eq:projCNS3}) for (\ref{eq:projCNS2}) is not straightforward, since 
$\proj_x \Gamma \subseteq \proj_x \Delta$ does not imply $\Gamma 
\subseteq \Delta$, in general. 

The condition on the matrix $M$ such that $\proj_x M((x, y, z)) = x$ for all 
$(x, y, z) \in \R^{n+m+p}$ is 
\begin{equation}\label{eq:N}
\left[\begin{array}{ccc} I \ee 0 \ee 0 \end{array}\right] =  
\left[\begin{array}{ccc} I \ee 0 \ee 0 \end{array}\right] M
\end{equation}
and then, from Lemma \ref{lem:Farkas} and Remark (\ref{rem:setalgebra1}), it 
follows that 
conditions (\ref{eq:projCNS3}) and (\ref{eq:N}) are equivalent to the existence 
of $T \in \R^{n_{\bar{g}} \times n_{\bar{h}}}$ and $M \in \R^{(n+m+p) \times 
(n+m+p)}$ satisfying (\ref{eq:FinalCond}). Then (\ref{eq:FinalCond}) is a 
sufficient condition for $\Omega \subseteq P \Gamma \oplus Q \Delta$.
\end{proof}

Thus, the inclusion of a set in the sum of sets can be tested by solving an LP 
feasibility problem. This results is applied to the stop condition for control 
invariance.

\subsection{N-step invariance condition as an LP problem}

Consider now condition (\ref{eq:invariance_condN}) with 
\begin{equation}\label{eq:OmegaU}
\Omega = \{x \in \R^n : \ H x\leq h\}, \quad U = \{u \in \R^m : \ G u \leq g\}
\end{equation}
where $H \in \R^{n_h \times n}$ and $G \in \R^{n_g \times m}$. 
Following the reasonings of the proof of Proposition~\ref{prop:Mink}, a 
tractable condition for the set inclusion (\ref{eq:invariance_condN}) to hold 
is given.

\begin{theorem}\label{th:Ninclusion}
Consider $\Omega$ and $U$ as in (\ref{eq:OmegaU}), with $H \in \R^{n_h \times 
n}$ and $G \in \R^{n_g \times m}$, and suppose that $0 \in \Omega$ and $0 \in 
U$. Then the set $\Omega_\infty$ as in (\ref{eq:Omegainfty}) is a control 
invariant set if there exist $T \in \R^{n_{\bar{g}} \times n_{\bar{h}}}$ and $M 
\in \R^{\bar{n} \times \bar{n}}$, with $n_{\bg} = n_h+Nn_g$, $n_{\bh} = n_h+2Nm$ 
and $\bar{n} = n+Nm$, such that 
\begin{equation}\label{eq:FinalCondLP}
\left\{\begin{array}{l}
T \bH = \bG M\\
T \bh \leq \bg\\
\left[\begin{array}{ccccc} I \ee 0 \ee 0 \ee \ldots \ee 0 \end{array}\right] =  
\left[\begin{array}{ccccc} I \ee 0 \ee 0 \ee \ldots \ee 0\end{array}\right] M
\end{array}\right.
\end{equation}
hold with 
\begin{equation}\label{eq:bGinv}
\bG = \left[\begin{array}{ccccc}
H A^N \ee H B \ee HAB \ee \ldots \ee HA^{N-1}B\\
0 \ee G \ee 0 \ee \ldots \ee 0\\
0 \ee 0 \ee G \ee \ldots \ee 0\\
 \ldots \ee \ldots  \ee \ldots  \ee \ldots \ee 
\ldots \\
0 \ee 0 \ee 0 \ee \ldots \ee G\\
\end{array}\right] \hspace{-0.1cm}, \ \
\bg = 
\left[\begin{array}{c}
h\\
g\\
g\\
\ldots\\
g
\end{array}\right] 
\end{equation}
where $\bG \in \R^{n_{\bg} \times \bar{n}}$ and $\bg \in \R^{n_{\bg}}$, and 
\begin{equation}\label{eq:bHinv}
\bH = \left[\begin{array}{cccccc}
H \ee 0 \ee 0 \ee \ldots \ee 0\\
0 \ee I \ee 0 \ee \ldots \ee 0\\
0 \ee -I \ee 0 \ee \ldots \ee 0\\
0 \ee 0 \ee I \ee \ldots \ee 0\\
0 \ee 0 \ee -I \ee \ldots \ee 0\\
 \ldots \ee \ldots  \ee \ldots  \ee \ldots \ee \ldots \\
0 \ee 0 \ee 0 \ee \ldots \ee I\\
0 \ee 0 \ee 0 \ee \ldots \ee -I\\
\end{array}\right] \hspace{-0.1cm}, \quad
\bh = \left[\begin{array}{c}
h\\
0\\
0\\
0\\
0\\
\ldots\\
0\\
0
\end{array}\right]
\end{equation}
where $\bH \in \R^{n_{\bh} \times \bar{n}}$ and $\bh \in \R^{n_{\bh}}$.
\end{theorem}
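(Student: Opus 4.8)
The plan is to reproduce, now for the full $N$-step stop condition~(\ref{eq:invariance_condN}), the lifting-and-Farkas argument used in the proof of Proposition~\ref{prop:Mink}. By the explicit representation~(\ref{eq:Ok}), the right-hand side of~(\ref{eq:invariance_condN}) is exactly $\Omega_N=\{x\in\R^n:\ A^Nx+\sum_{i=0}^{N-1}A^iBu_{i+1}\in\Omega,\ u_i\in U\ \forall i\}$, so~(\ref{eq:invariance_condN}) is the inclusion $\Omega\subseteq\Omega_N$. First I would lift both sides into $\R^{\bar n}$, $\bar n=n+Nm$, with coordinate vector $\bx=(x,u_1,\dots,u_N)$: the polyhedron $\Omega_{\oplus}=\{\bx\in\R^{\bar n}:\ \bG\bx\le\bg\}$, with $\bG,\bg$ as in~(\ref{eq:bGinv}), encodes precisely $A^Nx+\sum_{i=0}^{N-1}A^iBu_{i+1}\in\Omega$ (its first block row $HA^N\,|\,HB\,|\,\dots\,|\,HA^{N-1}B$) together with $u_i\in U$ for all $i$ (the remaining block rows $0\,|\,\dots\,|\,G\,|\,\dots\,|\,0$), whence $\proj_x\Omega_{\oplus}=\Omega_N$; and $\bO=\Omega\times\{0\}\times\cdots\times\{0\}=\{\bx\in\R^{\bar n}:\ \bH\bx\le\bh\}$, with $\bH,\bh$ as in~(\ref{eq:bHinv}) (the row pairs $I,-I$ pin each $u_i$ to $0$), satisfies $\proj_x\bO=\Omega$. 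Thus~(\ref{eq:invariance_condN}) is equivalent to the projection inclusion $\proj_x\bO\subseteq\proj_x\Omega_{\oplus}$.

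Next I would relax this to a true set inclusion in $\R^{\bar n}$ so that Lemma~\ref{lem:Farkas} applies. For any single-valued linear map $M\in\R^{\bar n\times\bar n}$ that leaves the $x$-block invariant, i.e.\ $[I\ 0\ \cdots\ 0]=[I\ 0\ \cdots\ 0]M$, every point of the preimage $M^{-1}\Omega_{\oplus}$ has the same $x$-block as its image in $\Omega_{\oplus}$, so $\proj_xM^{-1}\Omega_{\oplus}\subseteq\proj_x\Omega_{\oplus}$; together with monotonicity of $\proj_x$ this makes $\bO\subseteq M^{-1}\Omega_{\oplus}$ a sufficient condition for $\proj_x\bO\subseteq\proj_x\Omega_{\oplus}$, hence for~(\ref{eq:invariance_condN}). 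By Remark~\ref{rem:setalgebra1} the preimage $M^{-1}\Omega_{\oplus}=\{\bx\in\R^{\bar n}:\ \bG M\bx\le\bg\}$ is well defined even for singular $M$, so $\bO\subseteq M^{-1}\Omega_{\oplus}$ is an inclusion between two polyhedra in halfspace form; applying Lemma~\ref{lem:Farkas} with $\Gamma=\bO$ and $\Delta=M^{-1}\Omega_{\oplus}$ shows it holds whenever there is a nonnegative $T\in\R^{n_{\bg}\times n_{\bh}}$ with $T\bH=\bG M$ and $T\bh\le\bg$. Imposing in addition the $x$-preservation constraint on $M$ gives exactly the feasibility system~(\ref{eq:FinalCondLP}).

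It then remains to chain the implications: feasibility of~(\ref{eq:FinalCondLP}) yields $\bO\subseteq M^{-1}\Omega_{\oplus}$, hence $\proj_x\bO\subseteq\proj_x\Omega_{\oplus}$, i.e.\ the $N$-step condition~(\ref{eq:invariance_condN}); since $\Omega_N\subseteq\bigcup_{k=1}^N\Omega_k$, this implies the weaker inclusion~(\ref{eq:invariance_cond1}); and then the proposition stated right after~(\ref{eq:Omegainfty}) gives that $\Omega_\infty=\co(\bigcup_{k=1}^N\Omega_k)$ is control invariant for~(\ref{eq:system}) under $u\in U$. The hypotheses $0\in\Omega$ and $0\in U$ serve only as a natural normalization (in the spirit of Assumption~\ref{ass:XU}) under which $\bO$, $\Omega_{\oplus}$ and the resulting $\Omega_\infty$ are non-degenerate; the reduction to~(\ref{eq:FinalCondLP}) itself uses only the sign constraint $T\ge0$ in the elementary step $\bH\bx\le\bh\Rightarrow\bG M\bx=T\bH\bx\le T\bh\le\bg$.

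The one step I expect to be genuinely delicate is the relaxation in the second paragraph. The implication $\bO\subseteq M^{-1}\Omega_{\oplus}\Rightarrow\proj_x\bO\subseteq\proj_x\Omega_{\oplus}$ cannot in general be reversed: a projection inclusion need not lift to an inclusion of the pre-projection sets for any $x$-preserving $M$, which is precisely why the theorem only gives a sufficient condition. The free matrix $M$ is there to inject enough reparametrisation freedom — subject to keeping the $x$-coordinates fixed — that the lifted inclusion becomes Farkas-certifiable; checking that $M^{-1}$ inherits the $x$-preserving property, so that $\Omega_{\oplus}$ may be replaced by $M^{-1}\Omega_{\oplus}$ inside $\proj_x$, is the conceptual core, while the rest is routine bookkeeping with the block matrices and a verification of the announced dimensions $n_{\bg}=n_h+Nn_g$, $n_{\bh}=n_h+2Nm$, $\bar n=n+Nm$.
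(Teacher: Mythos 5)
Your proof is correct and follows essentially the same route as the paper: the paper's own proof of Theorem~\ref{th:Ninclusion} simply re-instantiates the lifting/Farkas argument of Proposition~\ref{prop:Mink} with the $\bar n$-dimensional polyhedra $\Omega_{\oplus}$ and $\bO$ built from (\ref{eq:bGinv})--(\ref{eq:bHinv}), which is exactly what you do (and you correctly note that only the inclusion $\proj_x M^{-1}\Omega_{\oplus}\subseteq\proj_x\Omega_{\oplus}$ is needed for sufficiency). The final chaining through (\ref{eq:invariance_condN}) $\Rightarrow$ (\ref{eq:invariance_cond1}) $\Rightarrow$ control invariance of $\Omega_\infty$ matches the paper's use of the proposition following (\ref{eq:Omegainfty}).
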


\begin{proof}
Consider condition (\ref{eq:invariance_condN}), sufficient for $\Omega_{\infty}$ 
to be a control invariant set. The proof follows the lines of the one of 
Proposition~\ref{prop:Mink}. From Remark~\ref{rem:setalgebra1}, the right-hand 
side term of (\ref{eq:invariance_condN}) is given by 
\begin{equation*}
\begin{array}{l}
\displaystyle \Omega_N = A^{-N} \hspace{-0.1cm} \left( \hspace{-0.1cm} \Omega 
\oplus \bigoplus_{i = 0}^{N-1} (-A^i B U) \hspace{-0.1cm} \right) = \{x \in 
\R^n : \, A^N x = y - B u_1\\
\hspace{0.4cm} - AB u_2 - \ldots -A^{N-1} Bu_N, \ H y \leq h, \ G u_i \leq g  
\ \forall i \in \N_N \}\\
\hspace{0.4cm} = \{x \in \R^n : \, H A^N x + H B u_1 + H A^{1} Bu_2 + \ldots \\
\hspace{0.4cm} + H A^{N-1} Bu_N  \leq h, \ G u_i \leq g \ \forall i \in 
\N_N \}
\end{array}
\end{equation*}
and then is the projection on $\R^n$ of the set 
\begin{equation*}\label{eq:OplusinvZ}
\begin{array}{l}
\Omega_{\oplus} = \{(x, u_{1}, u_{2}, \ldots, u_{N}) \in \R^{\bar{n}}: \ H A^N x 
+ H B u_1 + H A B  u_2 \ldots \\
+ H A^{N-1}Bu_N \leq h, \ G u_i \leq g \ \forall i 
\in \N_N\} = \{\bx \in \R^{\bar{n}}: \ \bG \bx \leq \bg\},
\end{array}
\end{equation*}
with $\bx = (x, u_{1}, u_{2}, \ldots, u_{N}) \in \R^{\bar{n}}$ and $\bG$ and 
$\bg$ as in (\ref{eq:bGinv}). The set $\bO$ in this case would result in 
\begin{equation*}
\begin{array}{cl}
\bO \hspace{-0.2cm} & = \{(x, u_{1}, u_{2}, \ldots, u_{N}) \in \R^{\bar{n}}: \ H 
x \leq h, \ u_i = 0 \ \forall i \in \N_N\} \\
& = \{\bx \in \R^{\bar{n}}: \ \bH \bx \leq \bh\} \subseteq \R^{\bar{n}}
\end{array}
\end{equation*}
with $\bH$ and $\bh$ as in (\ref{eq:bHinv}). From Proposition~\ref{prop:Mink}, 
the condition (\ref{eq:invariance_condN}) is satisfied if there are $T \in 
\R^{n_{\bar{g}} \times n_{\bar{h}}}$ and $M \in \R^{\bar{n} \times \bar{n}}$ 
such that (\ref{eq:FinalCondLP}) holds
\end{proof}

Finally, given the set $\Omega$ and $U$, to obtain the greatest multiple of 
$\Omega$, i.e. $\Omega_{\alpha} = \alpha \Omega$ such that 
(\ref{eq:invariance_condN}) holds, that is the greatest 
$\alpha \in \R$ such that 
\begin{equation}\label{eq:invariance_cond2}
\alpha \Omega = \Omega^{\alpha} \subseteq \Omega_N^{\alpha},
\end{equation}
with 
\begin{equation}\label{eq:Omegaka}
\Omega_k^{\alpha} = A^{-k} \left( \Omega^{\alpha} \oplus \bigoplus_{i = 
0}^{k-1} 
(-A^i B U)\right), \quad \forall k \in \N,\end{equation}
is equivalent to compute the smallest nonnegative $\beta$, with $\beta = 
\alpha^{-1}$, such that 
\begin{equation*}\label{eq:invariance_cond3Z}
 \Omega \subseteq A^{-N} \left( \Omega \oplus \bigoplus_{i = 0}^{N-1} (-A^i B 
\beta U) \right).
\end{equation*}
This consists in replacing $g$ with $\beta g$ in (\ref{eq:bGinv}) and leads to 
the following LP problem in $T$, $M$ and $\beta$ 
\begin{equation}\label{eq:FinalCondLPbeta}
\begin{array}{l}
\hspace{-0.5cm} \alpha^{-1} = \beta_N = \min_{\beta \in \R_+} \beta\\
\hspace{1.3cm} \mathrm{s.t. } \ \ T \bH = \bG M\\
\hspace{2.0cm} T \bh \leq \beta \hat{g} + \tilde{g}\\
\hspace{2.0cm} \left[\begin{array}{ccccc} I \ee  0 
\ee 0 \ee  \ldots 
\ee 0 \end{array}\right] =  
\left[\begin{array}{ccccc} I \ee  0 
\ee 0 \ee  \ldots 
\ee 0\end{array}\right] M
\end{array}
\end{equation}
with $\hat{g} = (0, \ g, \ g, \  \ldots, \ g)$ and $\tilde{g} = (h, \ 0, \ 0, \ 
\ldots, \ 0)$, sufficient for the N-step invariant 
condition 
\begin{equation*}\label{eq:invariance_condSZ}
\beta_N^{-1} \Omega \subseteq A^{-N} \left( \beta_N^{-1} \Omega \oplus 
\bigoplus_{i = 0}^{N-1} (-A^i B U) \right) \quad \mathrm{ with } \quad \alpha = 
\beta_N^{-1},
\end{equation*}
to hold. Note that using directly $\alpha$ would yield to replacing $h$ by 
$\alpha h$ in (\ref{eq:bGinv}) and (\ref{eq:bHinv}) and then to a nonlinear 
optimization problem. 

\section{Control invariant set\\ and state constraints}

If the stop condition (\ref{eq:invariance_condN}) is satisfied after 
appropriately scaling $\Omega$, i.e. with $\Omega = \Omega^{\alpha}$ satisfying 
(\ref{eq:FinalCondLPbeta}), the set $\Omega_N^{\alpha}$ is such that if $x \in 
\Omega_N^{\alpha}$ then it can be steered in $\Omega^{\alpha}$ in $N$ steps by a 
sequence of admissible control inputs $u_i \in U$ with $i \in \N_{N}$. Recall 
that, until now, the constraints on the state have not been taken into account, 
they will in Section~\ref{sec:state_constr}. 

Once $\Omega^{\alpha}$ is computed, one possible choice to obtain a control 
invariant set is considering $\Omega_{\infty}$ as in (\ref{eq:Ok}) and 
(\ref{eq:Omegainfty}). This would require to compute the convex hull of 
the union of several sets, each one given by the Minkowski sum of sets, but the 
convex hull operation is numerically demanding.

For this, given an arbitrary collection of non-empty convex sets 
$\Gamma_i \subseteq \R^n$ with $I \in \N$ and $i \in \N_I$, note that 
\begin{equation*}
\begin{array}{l}
\co\left(\bigcup_{i \in \N_I} \Gamma_i \right) = \bigcup_{\substack{\lambda \geq 
0 \\ \mathbf{1}^T \lambda = 1}} \hspace{-0.2cm} \left( \bigoplus_{i \in \N_I} 
\lambda_i \Gamma_i \right) \quad \mathrm{ and } \quad \displaystyle \lambda 
\bigoplus_{i \in \N_I}  \Gamma_i = \bigoplus_{i \in \N_I} \lambda \Gamma_i,
\end{array}
\end{equation*}
see Chapter 3 in \cite{Rockafellar}. Then, provided condition 
(\ref{eq:invariance_cond2}) is satisfied and with definition of $\Omega$ and 
$U$ as in (\ref{eq:OmegaU}), the invariant set is given by 
\begin{equation}\label{eq:Omegaconvex}
\begin{array}{rl}
 \Omega_\infty^{\alpha} \hspace{-0.2cm} & \displaystyle = \co \left(\bigcup_{k 
= 1}^{N} \Omega_k^{\alpha} \right) = \bigcup_{\substack{\lambda \geq 0 \\ 
\mathbf{1}^T \lambda = 1}} \left( \bigoplus_{k = 1}^N \lambda_k 
\Omega_k^{\alpha} \right) \\
& \displaystyle = \bigcup_{\substack{\lambda \geq 0 \\ 
\mathbf{1}^T \lambda = 1}} \left( \bigoplus_{k = 1}^N \lambda_k A^{-k} \left( 
\Omega^{\alpha} \oplus \bigoplus_{i = 0}^{k-1} (-A^i B U) \right)\right).
\end{array}
\end{equation}
Before proceeding, it is essential to notice that, given a convex set $\Omega$, 
the set $\gamma A^{-1} \Omega$ is well defined for all $\gamma \in \R$ and $A 
\in \R^{n \times n}$, even for $\gamma = 0$ and singular matrices $A$. In fact, 
it is given by
\begin{equation*}
 \gamma A^{-1} \Omega = \{\gamma x \in \R^n: \ x \in A^{-1} \Omega \} = 
\{\gamma 
x \in \R^n: \ A x \in \Omega \}.
\end{equation*}
This means, for instance, that, for $n = 1$, if $\gamma = 0$ and $A = 0$ one 
has $\gamma A^{-1} \Omega = \{0\}$ if $0 \in \Omega$, even if $A^{-1} \Omega = 
\R$. 

\begin{lemma}\label{lem:gammaA}
For every $\Omega \subseteq \R^n$, if $\gamma \neq 0$, with $\gamma \in \R$, or 
$\det(A) \neq 0$ then $\gamma A^{-1} \Omega = A^{-1} \gamma \Omega$.
\end{lemma}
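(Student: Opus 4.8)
The plan is to prove Lemma~\ref{lem:gammaA} by reducing it to the two elementary set-algebra facts collected in Remark~\ref{rem:setalgebra1}, splitting into the two cases named in the hypothesis. In both cases the goal is the identity $\gamma A^{-1} \Omega = A^{-1} \gamma \Omega$, where $A^{-1} \gamma \Omega$ is to be read as the preimage of the set $\gamma \Omega$ under $A$ (and $\gamma\Omega$ is the image of $\Omega$ under the scalar map $\gamma I$). I would begin by writing both sides out as explicit subsets of $\R^n$ using the characterisations already recorded in the excerpt.

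First, suppose $\gamma \neq 0$. Then the scalar mapping $x \mapsto \gamma x$ is invertible with inverse $x \mapsto \gamma^{-1} x$, so
\begin{equation*}
\gamma A^{-1}\Omega = \{\gamma x \in \R^n : Ax \in \Omega\} = \{y \in \R^n : A(\gamma^{-1} y) \in \Omega\} = \{y \in \R^n : (\gamma^{-1}A) y \in \Omega\},
\end{equation*}
and since $\gamma^{-1}A$ is a genuine matrix this last set is exactly the preimage of $\Omega$ under $\gamma^{-1}A$. On the other side, $\gamma\Omega = \{\gamma w : w \in \Omega\}$, so $A^{-1}\gamma\Omega = \{y \in \R^n : Ay \in \gamma\Omega\} = \{y \in \R^n : \gamma^{-1}Ay \in \Omega\}$, which is the same set. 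Hence equality holds. (Equivalently, one may invoke the composition/commutation of the linear maps $A$ and $\gamma I$: when $\gamma \neq 0$ both are legitimately invertible-or-not single-valued maps whose preimages compose, and $(\gamma I)A = A(\gamma I)$.)

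Second, suppose $\gamma = 0$ is allowed but $\det(A) \neq 0$. Then $A^{-1}$ exists as a matrix, and by Remark~\ref{rem:setalgebra1} (the displayed identity \eqref{eq:A-1OmegaB} for nonsingular $A$) we have $A^{-1}\Omega = \{A^{-1}x : x \in \Omega\}$, so $\gamma A^{-1}\Omega = \{\gamma A^{-1} x : x \in \Omega\} = (\gamma A^{-1})\Omega$ as an image. Meanwhile $A^{-1}\gamma\Omega = \{A^{-1} y : y \in \gamma\Omega\} = \{A^{-1}(\gamma x) : x \in \Omega\} = \{(A^{-1}\gamma)x : x \in \Omega\}$, and since the matrices $\gamma A^{-1}$ and $A^{-1}\gamma = \gamma A^{-1}$ coincide, the two images are the same set — even when $\gamma = 0$, in which case both equal $\{0\}$. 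This disposes of the second case and completes the proof.

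I do not expect a genuine obstacle here; the lemma is essentially bookkeeping. The only point requiring care — and the reason the hypothesis is split the way it is — is that when $\gamma = 0$ \emph{and} $A$ is singular neither side need be interpreted as an image of a matrix map, and indeed the example already given in the text (with $n=1$, $\gamma = A = 0$, $A^{-1}\Omega = \R$ but $\gamma A^{-1}\Omega = \{0\}$ while $A^{-1}\gamma\Omega = A^{-1}\{0\} = \R$) shows the identity can fail there; so I would make sure the write-up explicitly uses invertibility of \emph{at least one} of the two maps $\gamma I$ and $A$ in each branch, which is exactly what the hypothesis guarantees. The subtlety is that $\gamma A^{-1}\Omega$ is defined as $\{\gamma x : x \in A^{-1}\Omega\}$ regardless, so one must check the chosen rewriting is valid, not merely that the formulas look symmetric.
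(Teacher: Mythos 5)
Your proof is correct and follows essentially the same route as the paper's: a case split on $\gamma \neq 0$ versus $\det(A) \neq 0$, using invertibility of the scalar map $\gamma I$ in the first case (to absorb $\gamma^{-1}$ into $A$ and commute it through) and the existence of the matrix $A^{-1}$ in the second (so both sides become images under the commuting matrices $\gamma A^{-1} = A^{-1}\gamma$). The paper writes the argument as a chain of equivalences rather than set-builder rewritings, but the content is identical.
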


\begin{proof}
If $\gamma \neq 0$, one has 
\begin{equation*}\label{eq:switchgammaZ}
\begin{array}{l}
x \in \gamma A^{-1} \Omega \ \ \Leftrightarrow \ \  x = \gamma y, \ \ y \in 
A^{-1} \Omega \ \ \Leftrightarrow \ \  x = \gamma y, \ \ A y \in \Omega \\
\Leftrightarrow \ \ \gamma^{-1} x = y, \ \ A y \in \Omega \ \ \Leftrightarrow 
\ \ A \gamma^{-1} x \in \Omega \ \ \Leftrightarrow \ \ \gamma^{-1} A x \in 
\Omega \\ 
\Leftrightarrow \ \ A x = \gamma y, \ \ y \in \Omega \ \ \ \ \Leftrightarrow \ 
\ 
A x \in \gamma \Omega \ \ \Leftrightarrow \ \ x \in A^{-1} \gamma \Omega,
\end{array}
\end{equation*}
whereas if $\det(A) \neq 0$ it follows that 
\begin{equation*}
\begin{array}{l}
x \in \gamma A^{-1} \Omega \Leftrightarrow \ x = \gamma y, \ y \in A^{-1} 
\Omega \ \Leftrightarrow \ x = \gamma y, \ A y \in \Omega \ 
\Leftrightarrow \\
x = \gamma A^{-1} z, \ z \in \Omega \ \Leftrightarrow \ x = A^{-1} \gamma z, 
\ z \in \Omega \ \Leftrightarrow \ x \in A^{-1} \gamma \Omega.
\end{array}
\end{equation*}
\end{proof}

This means, in practice, that the operators $\gamma$ and $A^{-1}$ actuating 
on $\Omega$ can be switched, if either $\gamma \neq 0$ or $\det(A) \neq 0$. Note 
that, if $\gamma = 0$ and $A$ is singular, then the equality $\gamma A^{-1} 
\Omega = A^{-1} \gamma \Omega$ does not hold in general, as illustrated in the 
following example.

\begin{example}
Consider $\lambda = 0$,  $A = \left[\begin{array}{cc} 0 & 0\\ 0 & 1 
\end{array}\right]$ and $\Omega = \B^n$. Then $A^{-1} \gamma \Omega = \{x \in 
\R^n : A x \in \{0\}\} = \{x \in \R^n :  x_{(2)} = 0\}$ and $\gamma A^{-1} 
\Omega = \{\lambda x \in \R^n :  -1 \leq x_{(2)} \leq 1\} = \{0\}$.
\end{example} 

The cases of nonsingular and singular matrix $A$ are considered individually. 

\subsection{Invariant for nonsingular $A$}

If $A$ is nonsingular the invariant set is the polyhedron give below. 

\begin{proposition}
Let $\Omega$ and $U$ as in (\ref{eq:OmegaU}). If $\det(A) \neq 0$ then 
$\Omega_{\infty}^{\alpha}$ defined in (\ref{eq:Omegaconvex}) is equal to 
$\check{\Omega}_{\infty}^{\alpha}$ where
\begin{equation}\label{eq:Inv_for_Anonsin}
\begin{array}{l}
 \check{\Omega}_\infty^{\alpha} = \{ x \in \R^n: x = \sum_{k = 1}^N z_k, \, H 
A^{k} z_k + \sum_{i = 0}^{k-1} H A^i B v_{i+1,k} \leq \lambda_k \alpha_N h, \\
\hspace{1cm} G v_{i,k} \leq \lambda_k g \ \forall i \in \N_k \ \forall k \in 
\N_N, \ \lambda \geq 0, \ \sum_{k = 1}^N \lambda_k = 1\}.
\end{array}
\end{equation}
\end{proposition}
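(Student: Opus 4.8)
The plan is to show equality of the two sets $\Omega_\infty^\alpha$ and $\check{\Omega}_\infty^\alpha$ by a direct double inclusion, unwinding the definitions and using the algebraic identities established earlier in the paper. The key observation is that, when $\det(A)\neq 0$, Lemma~\ref{lem:gammaA} lets us commute the scaling $\lambda_k$ with the preimage operator $A^{-k}$, so that $\lambda_k A^{-k}(\,\cdot\,) = A^{-k}(\lambda_k\,\cdot\,)$; the delicate case $\lambda_k=0$ requiring $A$ nonsingular is exactly what the hypothesis excludes. Similarly, for the Minkowski sum inside, $\lambda_k\bigoplus_i \Gamma_i = \bigoplus_i \lambda_k\Gamma_i$ (already recalled from \cite{Rockafellar}), and $\lambda_k(-A^iBU) = -A^iB(\lambda_k U)$. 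So each summand $\lambda_k\Omega_k^\alpha$ in \eqref{eq:Omegaconvex} can be rewritten as a preimage of a scaled Minkowski sum, and its implicit (projected-polyhedron) representation is exactly: the set of $z_k$ for which there exist $v_{i,k}$ with $HA^k z_k + \sum_{i=0}^{k-1} HA^i B v_{i+1,k} \le \lambda_k\alpha_N h$ and $G v_{i,k}\le \lambda_k g$. This is the per-index description appearing inside \eqref{eq:Inv_for_Anonsin}.

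Concretely, first I would fix a convex combination $\lambda\ge 0$, $\mathbf 1^T\lambda=1$, and show $\bigoplus_{k=1}^N \lambda_k\Omega_k^\alpha$ equals $\{\sum_k z_k : z_k \in \lambda_k\Omega_k^\alpha\}$, which is immediate from the definition of $\oplus$. Then, for each $k$, I would give the implicit hyperplane description of $\lambda_k\Omega_k^\alpha$: starting from $\Omega_k^\alpha = A^{-k}(\alpha\Omega \oplus \bigoplus_{i=0}^{k-1}(-A^iBU))$, apply $\lambda_k$, push it through $A^{-k}$ via Lemma~\ref{lem:gammaA}, push it through $\oplus$, and through the scalings $-A^iB$, to get $\lambda_k\Omega_k^\alpha = A^{-k}(\lambda_k\alpha\Omega \oplus \bigoplus_{i=0}^{k-1}(-A^iB(\lambda_k U)))$. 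Spelling out the preimage through $A^k$ and the Minkowski sum as existence of the $v_{i,k}$ gives precisely the inequalities in \eqref{eq:Inv_for_Anonsin} with $\alpha = \alpha_N$ (recalling $\alpha_N h$ comes from scaling $h$ by $\alpha$, as in \eqref{eq:invariance_cond2}). Taking the union over all such $\lambda$ then yields $\check\Omega_\infty^\alpha$, and comparing with \eqref{eq:Omegaconvex} gives the claimed equality. The two inclusions are really the same computation read in opposite directions: given $x$ in the left-hand side one extracts the $z_k, v_{i,k}, \lambda_k$; given them on the right one reassembles $x=\sum_k z_k$ lying in the convex hull.

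The main obstacle I anticipate is the careful bookkeeping around the degenerate scalings $\lambda_k = 0$: a priori $0\cdot\Omega_k^\alpha = \{0\}$ (assuming $0\in\Omega_k^\alpha$, which holds since $0\in\Omega$, $0\in U$), and one must check that the inequality system in \eqref{eq:Inv_for_Anonsin} with $\lambda_k=0$ indeed forces $z_k=0$ and is not vacuously over-permissive. This is where $\det(A)\neq 0$ is essential: with $\lambda_k=0$ the constraints become $HA^k z_k + \sum_i HA^iB v_{i+1,k}\le 0$, $Gv_{i,k}\le 0$, and since $0\in U$ we get $v_{i,k}=0$ is forced up to the recession directions; using invertibility of $A$ one recovers $z_k\in A^{-k}(\alpha\cdot 0\cdot\text{rec}(\Omega)\oplus\dots)$, and when $\Omega$ (hence $\alpha\Omega$) is bounded this pins $z_k=0$. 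I would state this boundedness of $\Omega$ as part of the standing setup (it is compact by the earlier discussion) so that $0\cdot\Omega = \{0\}$, making the $\lambda_k=0$ case consistent. The rest is routine linear-algebra manipulation of preimages, Minkowski sums, and convex hulls, all of which are justified by Remark~\ref{rem:setalgebra1}, Lemma~\ref{lem:gammaA}, and the cited identities from \cite{Rockafellar}.
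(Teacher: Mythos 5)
Your proposal follows essentially the same route as the paper's proof: rewrite $\Omega_\infty^{\alpha}$ from (\ref{eq:Omegaconvex}) as $x=\sum_k z_k$ with $z_k\in\lambda_k\Omega_k^{\alpha}$, use Lemma~\ref{lem:gammaA} (where $\det(A)\neq 0$ enters) to commute $\lambda_k$ with $A^{-k}$, distribute the scalar over the Minkowski sum, and translate membership into the scaled inequalities $Hy_k\leq\lambda_k\alpha_N h$, $Gv_{i,k}\leq\lambda_k g$. Your extra caution about the degenerate case $\lambda_k=0$ (where $\lambda_k\Omega_k^{\alpha}=\{0\}$ while the inequality system yields a recession cone unless $\Omega$ and $U$ are bounded) is a legitimate point that the paper's proof passes over silently, and invoking compactness of $\Omega$ there is a reasonable way to close it.
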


\begin{proof}
From (\ref{eq:Omegaconvex}) and Lemma \ref{lem:gammaA} it follows
\begin{equation*}
\begin{array}{l}
\Omega_\infty^{\alpha} = \{ x \in \R^n: \ x = \sum_{k = 1}^N 
z_k, \ z_k \in \lambda_k \Omega_k^{\alpha} \ \forall k \in \N_N, \ \lambda \geq 
0, \\
\displaystyle \ \ \sum_{k = 1}^N \lambda_k = 1\} = \{ x \in \R^n: \ x = \sum_{k 
= 1}^N z_k, \ z_k \in A^{-k} \lambda_k \Bigg(\Omega^{\alpha}\\
\ \oplus \bigoplus_{i = 0}^{k-1} (-A^i B U) \Bigg) \forall k \in \N_N, \ 
\lambda 
\geq 0, \ \sum_{k = 1}^N \lambda_k = 1\} = \{ x \in \R^n: \\
\ \ x = \sum_{k = 1}^N z_k, \ \ A^{k} z_k \in  \left(\lambda_k  \Omega^{\alpha} 
\oplus \bigoplus_{i = 0}^{k-1} (-A^i B \lambda_k U) \right) \forall k \in 
\N_N,\\
\ \ \lambda \geq 0, \ \sum_{k = 1}^N \lambda_k = 1\} = \{ x \in \R^n: \ x = 
\sum_{k = 1}^N z_k, \ A^{k} z_k = y_k \\
\ \ - \sum_{i = 0}^{k-1} A^i B v_{i+1,k}, \ y_k \in  \lambda_k \Omega^{\alpha}, 
\ v_{i,k} \in \lambda_k U \ \forall i \in \N_k \ \forall k \in \N_N, \\
\ \ \lambda \geq 0, \ \sum_{k = 1}^N \lambda_k = 1\} = \{ x \in \R^n: \ x = 
\sum_{k = 1}^N z_k, \ A^{k} z_k = y_k - \\
\ \ \sum_{i = 0}^{k-1} A^i B v_{i+1,k}, \ H y_k \leq \lambda_k \alpha_N h, \ G 
v_{i,k} \leq \lambda_k g \ \forall i \in \N_k \\
\ \ \forall k \in \N_N, \ \lambda \geq 0, \ \sum_{k = 1}^N \lambda_k = 1\} = 
\check{\Omega}_\infty^{\alpha} 
\end{array}
\end{equation*}
where the second equality holds since $A$ is nonsingular and then $\lambda_k$ 
and $A^{-k}$ can be switched. 
\end{proof}

Note that the invariant set $\Omega_{\infty}^{\alpha}$ is then characterized by 
linear equalities and inequalities, that is by a polytope in higher dimension. 
This means that testing if a state is in $\Omega_{\infty}^{\alpha}$ reduces to 
solve a feasibility problem with linear constraints. Also the problem of 
enforcing state constraints, see Section~\ref{sec:state_constr} below, can be 
solved through convex optimization by using the representation 
(\ref{eq:Inv_for_Anonsin}). Moreover, such a representation is particularly 
suitable to be used in optimization-based control, as model predictive control 
for instance, since it reduces to enforcing the linear constraints 
characterizing $\Omega_{\infty}^{\alpha}$.

\subsection{Invariant for singular $A$}
In the other case, namely if $A$ is singular, the polyhedral form of the 
invariant set is less straightforward. 

\begin{proposition}\label{prop:Inv_for_Anonsin}
Let $\Omega$ and $U$ as in (\ref{eq:OmegaU}). If $\det(A) = 0$ then 
$\Omega_{\infty}^{\alpha}$ defined in (\ref{eq:Omegaconvex}) is equal to 
$\hat{\Omega}_{\infty}^{\alpha}$ where
\begin{equation}\label{eq:Inv_for_Asin}
\begin{array}{l}
\hat{\Omega}_\infty^{\alpha} = \{ x \in \R^n: \, x = \sum_{k = 1}^N \lambda_k 
w_k, \, H A^{k} w_k + \sum_{i = 0}^{k-1} H A^i B v_{i+1,k} \leq \alpha_N h, \\ 
\hspace{1cm} G v_{i,k} \leq g \ \forall i \in \N_k \, \forall k \in \N_N, \ 
\lambda \geq 0, \ \sum_{k = 1}^N \lambda_k = 1\}.
\end{array}
\end{equation}
\end{proposition}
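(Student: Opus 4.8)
The plan is to mimic the proof of the nonsingular case, carefully tracking where singularity of $A$ forces a different bookkeeping of the scaling factors $\lambda_k$. Starting from the representation (\ref{eq:Omegaconvex}), namely
\begin{equation*}
\Omega_\infty^{\alpha} = \bigcup_{\substack{\lambda \geq 0 \\ \mathbf{1}^T \lambda = 1}} \left( \bigoplus_{k = 1}^N \lambda_k A^{-k} \left( \Omega^{\alpha} \oplus \bigoplus_{i = 0}^{k-1} (-A^i B U) \right)\right),
\end{equation*}
I would write $x \in \Omega_\infty^{\alpha}$ as $x = \sum_{k=1}^N z_k$ with $z_k \in \lambda_k A^{-k}(\Omega^\alpha \oplus \bigoplus_{i=0}^{k-1}(-A^iBU))$, $\lambda \geq 0$, $\sum_k \lambda_k = 1$. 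The crucial difference from the nonsingular case is that Lemma~\ref{lem:gammaA} does \emph{not} let us switch $\lambda_k$ and $A^{-k}$ when $\lambda_k = 0$ and $A$ is singular. So instead I would keep the scalar \emph{outside}: write $z_k = \lambda_k w_k$ with $w_k \in A^{-k}(\Omega^\alpha \oplus \bigoplus_{i=0}^{k-1}(-A^iBU)) = \Omega_k^\alpha$, which is legitimate for all $\lambda_k \geq 0$ (when $\lambda_k = 0$, $z_k = 0$, and since $0 \in \Omega_k^\alpha$ by the hypotheses $0 \in \Omega$, $0 \in U$, one may still pick any $w_k \in \Omega_k^\alpha$, e.g.\ $w_k = 0$). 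This gives $x = \sum_{k=1}^N \lambda_k w_k$ with $w_k \in \Omega_k^\alpha$.

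Next I would unfold the membership $w_k \in \Omega_k^\alpha = A^{-k}(\Omega^\alpha \oplus \bigoplus_{i=0}^{k-1}(-A^iBU))$ directly from the definition of the preimage and the Minkowski sum in Remark~\ref{rem:setalgebra1}: $w_k \in \Omega_k^\alpha$ iff $A^k w_k = y_k - \sum_{i=0}^{k-1} A^i B v_{i+1,k}$ for some $y_k \in \Omega^\alpha$ and $v_{i,k} \in U$, $i \in \N_k$. Substituting the halfspace descriptions $\Omega^\alpha = \{y : Hy \leq \alpha_N h\}$ (recalling $\alpha = \beta_N^{-1}$, so $\Omega^\alpha = \alpha_N \Omega$ scales $h$ by $\alpha_N$) and $U = \{v : Gv \leq g\}$, and eliminating $y_k$ via $y_k = A^k w_k + \sum_{i=0}^{k-1} A^i B v_{i+1,k}$, the condition $Hy_k \leq \alpha_N h$ becomes $H A^k w_k + \sum_{i=0}^{k-1} H A^i B v_{i+1,k} \leq \alpha_N h$, and $v_{i,k} \in U$ becomes $G v_{i,k} \leq g$. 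Collecting these for all $i \in \N_k$, all $k \in \N_N$, together with $x = \sum_k \lambda_k w_k$, $\lambda \geq 0$, $\sum_k \lambda_k = 1$, reproduces exactly the set $\hat{\Omega}_\infty^{\alpha}$ in (\ref{eq:Inv_for_Asin}). The reverse inclusion $\hat{\Omega}_\infty^{\alpha} \subseteq \Omega_\infty^{\alpha}$ is obtained by reading the same chain of equivalences backwards: given a point of $\hat{\Omega}_\infty^{\alpha}$, each $w_k$ satisfying the displayed inequalities lies in $\Omega_k^\alpha$, hence $\lambda_k w_k \in \lambda_k \Omega_k^\alpha$, and $x = \sum_k \lambda_k w_k \in \bigoplus_k \lambda_k \Omega_k^\alpha \subseteq \Omega_\infty^\alpha$.

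The main obstacle — really the only subtle point — is justifying that keeping $\lambda_k$ outside the preimage loses nothing: one must check that $\{\lambda_k w_k : w_k \in \Omega_k^\alpha\} = \lambda_k \Omega_k^\alpha$ as sets for every $\lambda_k \geq 0$, which is immediate from the definition of scalar multiplication of a set, and that the degenerate term $\lambda_k = 0$ contributes exactly $\{0\}$, which is consistent because $0 \in \Omega_k^\alpha$ (this is where $0 \in \Omega$ and $0 \in U$ are used, and it is also exactly the phenomenon flagged in the Example following Lemma~\ref{lem:gammaA}). Everything else is the same routine substitution and variable elimination as in the nonsingular proof; the only structural change in the final formula is that the scaling $\lambda_k$ multiplies $w_k$ on the outside rather than appearing as $\lambda_k \alpha_N h$ and $\lambda_k g$ on the right-hand sides of the inequalities, because here we never performed the switch of $\lambda_k$ and $A^{-k}$.
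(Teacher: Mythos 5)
Your proof is correct and follows essentially the same route as the paper's: decompose $x=\sum_k z_k$ with $z_k\in\lambda_k\Omega_k^{\alpha}$, factor $z_k=\lambda_k w_k$ with $w_k\in\Omega_k^{\alpha}$ (keeping the scalar outside the preimage rather than attempting the forbidden switch with $A^{-k}$), and unfold the membership via the halfspace representations. The only difference is that you make explicit the justification of the degenerate case $\lambda_k=0$ via $0\in\Omega_k^{\alpha}$, which the paper leaves implicit.
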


\begin{proof}
The set $\Omega_{\infty}^{\alpha}$ is given by
\begin{equation*}
\begin{array}{l}
 \Omega_\infty^{\alpha}  \displaystyle = \{ x \in \R^n: \ x = \sum_{k = 1}^N 
z_k, \ z_k \in \lambda_k \Omega_k^{\alpha} \ \forall k \in \N_N, \lambda 
\geq 0,\\
\ \ \sum_{k = 1}^N \lambda_k = 1\} = \{ x \in \R^n: \ x = \sum_{k = 1}^N z_k, \ 
z_k \in \lambda_k A^{-k} \Bigg(\Omega^{\alpha}\\
\ \ \oplus \bigoplus_{i = 0}^{k-1} (-A^i B U) \Bigg) \, \forall k \in \N_N, \ 
\lambda \geq 0, \sum_{k = 1}^N \lambda_k = 1\} = \{ x \in \R^n: \\
\ \ x = \sum_{k = 1}^N z_k, \ z_k = \lambda_k w_k,  \ w_k \in A^{-k} 
\left(\Omega^{\alpha} \oplus \bigoplus_{i = 0}^{k-1} (-A^i B U) \right)\\
\ \  \forall k \in \N_N, \lambda \geq 0, \ \sum_{k = 1}^N \lambda_k = 1\} = \{ 
x 
\in \R^n: \ x = \sum_{k = 1}^N \lambda_k w_k, \\
\ \ A^{k} w_k = y_k - \sum_{i = 0}^{k-1} A^i B v_{i+1,k}, \ y_k \in 
\Omega^{\alpha}, \ v_{i,k} \in U \ \forall i \in \N_k\\
\ \  \forall k \in \N_N, \lambda \geq 0, \sum_{k = 1}^N \lambda_k = 1\} = \{ x 
\in \R^n: \ x = \sum_{k = 1}^N \lambda_k w_k, \\
\ \ A^{k} w_k = y_k - \sum_{i = 0}^{k-1} A^i B v_{i+1,k}, \ H y_k \leq \alpha_N 
h, \ G v_{i,k} \leq g \ \forall i \in \N_k \\
\ \ \forall k \in \N_N, \lambda \geq 0, \sum_{k = 1}^N \lambda_k = 1\} = 
\hat{\Omega}_{\infty}^{\alpha}.
\end{array}
\end{equation*}
\end{proof}

Unfortunately, this representation of $\Omega_{\infty}^{\alpha}$ is not 
suitable to be directly tested through an LP feasibility problem, due to the 
nonlinearities $\lambda_k w_k$. This means that checking whether a state is 
contained in $\hat{\Omega}_\infty^{\alpha}$ could not be solved through LP 
problems, as for nonsingular $A$. Neither the problem of computing the biggest 
copy of $\hat{\Omega}_\infty^{\alpha}$ satisfying the state constraints 
(treated in Section~\ref{sec:state_constr}, see (\ref{eq:LPOinX})) could be 
addressed by convex optimization problems.

What we are going to prove is that the expression of $\Omega_{\infty}^{\alpha}$ 
as in (\ref{eq:Inv_for_Anonsin}) holds also when $A$ is a singular matrix, 
that is $\Omega_{\infty}^{\alpha} = \hat{\Omega}_{\infty}^{\alpha} = 
\check{\Omega}_{\infty}^{\alpha}$. For notational simplicity we define 
\begin{equation}\label{eq:bOk}
\bO_k = \Omega^{\alpha} \oplus \bigoplus_{i = 0}^{k-1} (-A^i B U),
\end{equation}
so that $\Omega_k^{\alpha} = A^{-k} \bO_k$, for all $k \in \N_N$. Then the sets 
$\check{\Omega}_\infty^{\alpha}$ and $\hat{\Omega}_\infty^{\alpha}$ defined in 
(\ref{eq:Inv_for_Anonsin}) and (\ref{eq:Inv_for_Asin}) become
\begin{equation}\label{eq:Oinfs}
\begin{array}{l}
\check{\Omega}_\infty^{\alpha} = \bigcup_{\substack{\lambda \geq 0 \\ 
\mathbf{1}^T \lambda = 1}} \left( \bigoplus_{k = 1}^N A^{-k} \lambda_k \bO_k 
\right), \\
\hat{\Omega}_\infty^{\alpha} = \co \left(\bigcup_{k = 1}^{N} 
A^{-k} \bO_k \right) = \bigcup_{\substack{\lambda \geq 0 \\ 
\mathbf{1}^T \lambda = 1}} \left( \bigoplus_{k = 1}^N \lambda_k A^{-k} \bO_k 
\right),  
\end{array}
\end{equation}
which are equal if $A$ is nonsingular. We prove that they are equal also for 
singular $A$. For this aim, some preliminary results are to be recalled or 
introduced here.

\begin{definition}\label{def:bert}\cite{Bertsekas09}
Given a nonempty convex set $C$, the vector $d$ is a direction of recession of 
$C$ if $x + \alpha d \in C$ for all $x \in C$ and $\alpha \geq 0$. The set of 
all directions of recession is a cone containing the origin, called the 
recession cone of C. The lineality space of C, denoted $L_C$, is the set of 
directions of recession $d$ whose opposite, $-d$, are also directions of 
recession. Given a subspace $S$, $S^{\bot}$ is its orthogonal complement.
\end{definition}

\begin{theorem}\label{th:Rock2_6}\cite{Rockafellar}
A subset of $\R^n$ is a convex cone if and only if it is closed under addition 
and positive scalar multiplication.
\end{theorem}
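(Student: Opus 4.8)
The plan is to prove the two implications directly from the definitions, using that a \emph{cone} is by definition a set closed under multiplication by strictly positive scalars, and that a \emph{convex cone} is a cone that is in addition convex; in particular I do not presuppose that the origin belongs to the set.

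First I would treat the forward implication. Assume $K \subseteq \R^n$ is a convex cone. Closure under positive scalar multiplication is then exactly the cone hypothesis, so the only thing to verify is closure under addition: given $x, y \in K$, convexity gives $w = \frac{1}{2} x + \frac{1}{2} y \in K$, and multiplying by the positive scalar $2$ yields $x + y = 2 w \in K$.

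Next I would prove the converse. Suppose $K$ is closed under addition and under positive scalar multiplication; the cone property is then immediate, so it remains to establish convexity. For $x, y \in K$ and $\lambda \in (0,1)$, both $\lambda x$ and $(1-\lambda) y$ lie in $K$ by the scalar-multiplication closure, hence their sum $\lambda x + (1-\lambda) y$ lies in $K$ by closure under addition; the endpoint cases $\lambda \in \{0,1\}$ are trivial since they return $y$ or $x$. Combining the two directions establishes the equivalence.

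I do not expect any genuine obstacle here: the result is classical (it is precisely the fact invoked from \cite{Rockafellar}, and its proof amounts to the two-line manipulations above). The only points worth a moment's care are the endpoint values $\lambda = 0, 1$ in the convexity step, and the observation that, because only \emph{strictly positive} scalars are used throughout, the argument never requires $0 \in K$ — so the statement applies verbatim to sets such as open half-lines and, vacuously, to the empty set.
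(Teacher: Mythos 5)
Your proof is correct and is essentially the standard argument from Rockafellar (Theorem 2.6), which the paper simply cites without reproducing a proof: addition follows from convexity via the midpoint scaled by $2$, and convexity follows from closure under positive scaling and addition, with the endpoint cases $\lambda\in\{0,1\}$ handled trivially. Your remark that strictly positive scalars suffice, so that $0\in K$ is never needed, is consistent with the convention used in the cited reference.
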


\begin{theorem}\label{th:Rock3_8}\cite{Rockafellar}
If $K_1$ and $K_2$ are convex cones containing the origin then $K_1 \oplus K_2 
= \co \left(K_1 \cup K_2\right)$.
\end{theorem}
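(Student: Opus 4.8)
The plan is to establish the two set inclusions $\co(K_1 \cup K_2) \subseteq K_1 \oplus K_2$ and $K_1 \oplus K_2 \subseteq \co(K_1 \cup K_2)$ separately, using only that $K_1$ and $K_2$ are convex cones with $0 \in K_1 \cap K_2$ (so in particular both are nonempty), Theorem~\ref{th:Rock2_6}, and the identity $\co\left(\bigcup_{i \in \N_I} \Gamma_i\right) = \bigcup_{\lambda \geq 0, \ \mathbf{1}^T \lambda = 1}\left(\bigoplus_{i \in \N_I} \lambda_i \Gamma_i\right)$ recalled above for finite families of nonempty convex sets.

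For $\co(K_1 \cup K_2) \subseteq K_1 \oplus K_2$, I would first observe that $K_1 \oplus K_2$ is convex, being a Minkowski sum of convex sets. Since $0 \in K_2$, any $x \in K_1$ can be written $x = x + 0 \in K_1 \oplus K_2$, hence $K_1 \subseteq K_1 \oplus K_2$, and symmetrically $K_2 \subseteq K_1 \oplus K_2$; therefore $K_1 \cup K_2 \subseteq K_1 \oplus K_2$. Since the convex hull of a set is the smallest convex set containing it, the inclusion follows immediately.

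For the reverse inclusion $K_1 \oplus K_2 \subseteq \co(K_1 \cup K_2)$, I would specialize the union--convex-hull identity to $I = 2$, $\Gamma_1 = K_1$, $\Gamma_2 = K_2$, obtaining $\co(K_1 \cup K_2) = \bigcup_{\lambda_1 + \lambda_2 = 1, \ \lambda_i \geq 0}\left(\lambda_1 K_1 \oplus \lambda_2 K_2\right)$. By the characterization in Theorem~\ref{th:Rock2_6}, a convex cone is closed under multiplication by any positive scalar, which gives both $\frac{1}{2} K_1 \subseteq K_1$ (scaling elements of $K_1$ by $1/2$) and $K_1 \subseteq \frac{1}{2} K_1$ (since every $x \in K_1$ equals $\frac{1}{2}(2x)$ with $2x \in K_1$), i.e. $\frac{1}{2} K_1 = K_1$, and likewise $\frac{1}{2} K_2 = K_2$. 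Taking $\lambda_1 = \lambda_2 = \frac{1}{2}$ in the union then shows $K_1 \oplus K_2 = \frac{1}{2} K_1 \oplus \frac{1}{2} K_2 \subseteq \co(K_1 \cup K_2)$. Combining the two inclusions yields the claimed equality.

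I do not expect a real obstacle in this argument; the one thing to keep track of is that conicity enters in two distinct places — through $0 \in K_i$, which is what lets each cone embed into the sum $K_1 \oplus K_2$, and through $\gamma K_i = K_i$ for $\gamma > 0$, which is what collapses the scaled Minkowski sum appearing in the convex-hull identity back to $K_1 \oplus K_2$ — and that without the hypothesis $0 \in K_1 \cap K_2$ the inclusion $K_1 \cup K_2 \subseteq K_1 \oplus K_2$, hence the identity, can fail.
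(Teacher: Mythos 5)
Your proof is correct. Note first that the paper itself offers no argument for this statement: it is quoted verbatim from Rockafellar (Theorem 3.8 of \emph{Convex Analysis}) and used as a black box, so there is nothing in the text to compare against line by line. Both of your inclusions are sound. For $\co(K_1\cup K_2)\subseteq K_1\oplus K_2$ you correctly use that $0\in K_2$ gives $K_1=K_1\oplus\{0\}\subseteq K_1\oplus K_2$ (and symmetrically), together with convexity of the Minkowski sum of convex sets and minimality of the convex hull. For the reverse inclusion, your appeal to the identity $\co\bigl(\bigcup_i\Gamma_i\bigr)=\bigcup_{\lambda\geq 0,\ \mathbf{1}^T\lambda=1}\bigl(\bigoplus_i\lambda_i\Gamma_i\bigr)$ with $\lambda_1=\lambda_2=\tfrac12$, combined with $\tfrac12 K_i=K_i$ (which you justify in both directions from closure under positive scaling), is valid; the only hypothesis of that identity, nonemptiness, is guaranteed by $0\in K_i$. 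The one remark worth making is that this second half leans on the union--convex-hull identity, which is itself a nontrivial Rockafellar theorem (the paper does quote it, so this is admissible here); a self-contained alternative is the one-line computation $x+y=\tfrac12(2x)+\tfrac12(2y)$ with $2x\in K_1$, $2y\in K_2$, exhibiting any point of $K_1\oplus K_2$ directly as a convex combination of points of $K_1\cup K_2$. Your closing observation about where each hypothesis enters (the origin for one inclusion, positive homogeneity for the other) is accurate.
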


\begin{lemma}\label{lem:S1S2}
Given the subspaces $S_1, S_2 \subseteq \R^n$, we have $S_1 = S_1 \oplus S_1$ 
and $S_1 \oplus S_2 = \co \left(S_1 \cup S_2\right)$. 
\end{lemma}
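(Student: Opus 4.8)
The plan is to derive both identities directly from the elementary properties of subspaces together with the two cited results, Theorems~\ref{th:Rock2_6} and~\ref{th:Rock3_8}.

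For the first identity I would argue by double inclusion. Since $S_1$ is a subspace it is closed under addition, so every element $x + y$ with $x, y \in S_1$ lies in $S_1$, which gives $S_1 \oplus S_1 \subseteq S_1$. Conversely, $0 \in S_1$, hence each $x \in S_1$ can be written as $x = x + 0 \in S_1 \oplus S_1$, yielding $S_1 \subseteq S_1 \oplus S_1$. Combining the two inclusions gives $S_1 = S_1 \oplus S_1$.

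For the second identity, the key observation is that every subspace is a convex cone containing the origin. Indeed, a subspace is closed under addition and under multiplication by an arbitrary scalar, in particular by any positive scalar, so by Theorem~\ref{th:Rock2_6} it is a convex cone; and it trivially contains $0$. Applying Theorem~\ref{th:Rock3_8} with $K_1 = S_1$ and $K_2 = S_2$ then immediately yields $S_1 \oplus S_2 = \co\left(S_1 \cup S_2\right)$.

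I do not expect any genuine obstacle: the statement is essentially the specialization of Theorem~\ref{th:Rock3_8} to the case of subspaces, and the only point that requires (minimal) care is checking that subspaces satisfy the hypotheses of the cited theorems, namely that they are convex cones containing the origin.
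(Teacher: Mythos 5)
Your proof is correct and follows essentially the same route as the paper: both identities rest on the observation that a subspace is a convex cone containing the origin, so that Theorem~\ref{th:Rock3_8} applies. The only cosmetic difference is that you establish $S_1 = S_1 \oplus S_1$ by a direct double inclusion, whereas the paper implicitly gets it from the same cited theorems (e.g. $S_1 \oplus S_1 = \co(S_1 \cup S_1) = S_1$); both arguments are valid.
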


\begin{proof}
It follows from Theorems~\ref{th:Rock2_6} and \ref{th:Rock3_8} and the fact 
that every subspace is a convex cone containing the origin.
\end{proof}

\begin{proposition}\label{prop:decomposition}(Decomposition of a Convex Set 
\cite{Bertsekas09}) Let $C$ be a nonempty convex subset of $\R^n$. Then, for 
every subspace $S$ that is contained in the lineality space $L_C$, we have $C = 
(C \cap S^{\bot}) \oplus S$.
\end{proposition}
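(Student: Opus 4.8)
The plan is to prove the two set inclusions separately, using only the notions of \emph{direction of recession} and \emph{lineality space} recalled in Definition~\ref{def:bert}, together with the orthogonal decomposition $\R^n = S \oplus S^{\bot}$ supplied by the projections $\proj_S$ and $\proj_{S^{\bot}}$.

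First I would establish $(C \cap S^{\bot}) \oplus S \subseteq C$. Pick any $y \in C \cap S^{\bot}$ and any $s \in S$. Since $S \subseteq L_C$, the vector $s$ is in particular a direction of recession of $C$, so by definition $y + \alpha s \in C$ for every $\alpha \geq 0$; taking $\alpha = 1$ gives $y + s \in C$. As $y$ and $s$ were arbitrary, every element of $(C \cap S^{\bot}) \oplus S$ lies in $C$.

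For the reverse inclusion $C \subseteq (C \cap S^{\bot}) \oplus S$, take any $x \in C$ and split it as $x = s + y$ with $s = \proj_S x \in S$ and $y = \proj_{S^{\bot}} x = x - s \in S^{\bot}$. It only remains to check that $y \in C$. Because $S$ is a subspace contained in $L_C$, the vector $-s \in S$ is also a direction of recession of $C$, hence $x + \alpha(-s) \in C$ for all $\alpha \geq 0$; with $\alpha = 1$ this yields $y = x - s \in C$. Therefore $y \in C \cap S^{\bot}$ and $x = y + s \in (C \cap S^{\bot}) \oplus S$, which gives the inclusion (and incidentally shows $C \cap S^{\bot} \neq \emptyset$ whenever $C \neq \emptyset$).

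The argument is essentially immediate once the right decomposition of $x$ is written down, so I do not expect a genuine obstacle; the single point that must be used carefully is that membership of $S$ in the lineality space gives \emph{both} $s$ and $-s$ as recession directions, which is exactly what keeps the orthogonal component $y = x - s$ inside $C$. Convexity of $C$ is used only insofar as it makes $L_C$ a subspace; nonemptiness is what makes the statement non-vacuous.
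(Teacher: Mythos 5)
Your proof is correct: both inclusions follow exactly as you argue from the paper's Definition~\ref{def:bert}, the key step being that $s\in S\subseteq L_C$ puts both $s$ and $-s$ in the recession cone, so the orthogonal component $x-\proj_S x$ stays in $C$. The paper itself gives no proof (it quotes the result from \cite{Bertsekas09}), and your argument is the standard one found there, so there is nothing to reconcile.
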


\begin{lemma}\label{lem:CSS}
Given the nonempty convex set $C \subseteq \R^n$, for every subspace $S 
\subseteq C$, we have $C \oplus S = C$. 
\end{lemma}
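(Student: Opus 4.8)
The plan is to use the Decomposition of a Convex Set (Proposition~\ref{prop:decomposition}) applied to the subspace $S$, which is legitimate since $S \subseteq C$ and $C$ is convex implies every vector of $S$ is a direction of recession of $C$ whose opposite is also a direction of recession, i.e. $S \subseteq L_C$. Indeed, to check $S \subseteq L_C$: take $d \in S$ and $x \in C$; for $\alpha \geq 0$ we must show $x + \alpha d \in C$. Since $0 \in S \subseteq C$ and $\alpha d \in S \subseteq C$ (subspaces are closed under scalar multiplication), convexity of $C$ does not immediately give $x + \alpha d \in C$ — so the first genuine step is to argue this carefully, presumably by noting that $S \subseteq C$ together with convexity forces the affine combinations, or more cleanly by invoking that a convex set containing a subspace $S$ has $S$ in its lineality space (a standard fact; one shows $x + \alpha d$ and $x - \alpha d$ stay in $C$ by writing $x = \tfrac12(x+\alpha d) + \tfrac12(x - \alpha d)$ after first verifying membership along the line through $x$ in direction $d$ using the recession structure). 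I would state this as the key observation: $S \subseteq C \Rightarrow S \subseteq L_C$.

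Once $S \subseteq L_C$ is established, Proposition~\ref{prop:decomposition} gives $C = (C \cap S^{\bot}) \oplus S$. Then I would compute
\begin{equation*}
C \oplus S = \big( (C \cap S^{\bot}) \oplus S \big) \oplus S = (C \cap S^{\bot}) \oplus (S \oplus S) = (C \cap S^{\bot}) \oplus S = C,
\end{equation*}
where the middle equality uses associativity and commutativity of Minkowski addition and the third equality uses $S \oplus S = S$ from Lemma~\ref{lem:S1S2}. This chain closes the proof.

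The main obstacle I expect is the first step: justifying $S \subseteq L_C$ rigorously from $S \subseteq C$ and convexity of $C$, since the definition of direction of recession (Definition~\ref{def:bert}) quantifies over \emph{all} $x \in C$, not just $x \in S$. The cleanest route is: for $d \in S$, $x \in C$, and $\lambda \in (0,1)$, the point $(1-\lambda) x + \lambda \big( \tfrac{1}{\lambda} d \big)$ lies in $C$ whenever $\tfrac{1}{\lambda} d \in S \subseteq C$, so $(1-\lambda) x + d \in C$; letting $\lambda \to 0$ and using that we actually want $x + \alpha d$, we rescale: for any $\alpha \geq 0$ pick $\lambda$ small so that $\tfrac{\alpha}{1-\lambda} \cdot (1-\lambda) = \alpha$, i.e. write $x + \alpha d = (1-\lambda) x + \lambda\big(\tfrac{\lambda x + \alpha d}{\lambda}\big)$ and note $\tfrac{\lambda x + \alpha d}{\lambda} = x + \tfrac{\alpha}{\lambda} d$ — this is circular, so instead I would use the standard argument that for a \emph{nonempty convex} set $C$, $d$ is a recession direction iff there exists one $x \in C$ with $x + \alpha d \in C$ for all $\alpha \geq 0$ (Rockafellar); taking that one $x$ to be $0 \in S \subseteq C$, we get $\alpha d \in S \subseteq C$ for all $\alpha$, hence $d$ (and $-d$) are recession directions, so $S \subseteq L_C$. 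I would cite the relevant part of \cite{Rockafellar} or \cite{Bertsekas09} for this "single point suffices" characterization and keep the rest of the proof as the short two-line Minkowski computation above.
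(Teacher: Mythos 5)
Your proof is correct and follows essentially the same route as the paper's: apply Proposition~\ref{prop:decomposition} with $S \subseteq L_C$ to get $C = (C \cap S^{\bot}) \oplus S$, then absorb the extra summand via $S \oplus S = S$ from Lemma~\ref{lem:S1S2}. The only difference is that the paper simply asserts $S \subseteq L_C$ without comment, whereas you justify it via the single-point characterization of recession directions; as you suspect, that characterization (and indeed the lemma itself) implicitly requires $C$ to be closed, which holds in every use made of the lemma here since $C$ is polyhedral, so your extra care merely surfaces a hypothesis the statement leaves tacit.
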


\begin{proof}
From Proposition~\ref{prop:decomposition} and Lemma~\ref{lem:S1S2}, and since 
$S \subseteq L_C$, it follows that $C \oplus S = (C \cap S^{\bot}) \oplus S 
\oplus S = (C \cap S^{\bot}) \oplus S = C$.
\end{proof}

Finally, given $K \subseteq \N_N$ and defined $\bar{K} = \N_n / K$ and 
\begin{equation*}
 \Lambda(K) = \left\{\lambda \in \R^n: \ \lambda_k > 0 \ \forall k \in K, 
\quad \lambda_k = 0 \ \forall k \in \bar{K} \right\}
\end{equation*}
(note that $\lambda_k$ is strictly positive if and only if $k \in K$) one has 
\begin{equation}\label{eq:lambdaK}
\begin{array}{l}
\displaystyle \{\lambda \in \R^n: \ \lambda_k \geq 0 \ \forall k \in \N_n, \ 
\mathbf{1}^T \lambda = 1\} \\ 
\hspace{1cm} = \bigcup_{K \subseteq \N_N} \{\lambda \in 
\Lambda(K): \mathbf{1}^T \lambda = 1\}
\end{array}
\end{equation}
where $K$ denotes the set of indices such that $\lambda_k$ is not zero, in 
practice. In fact, for every $\lambda$ in the l.h.s. of (\ref{eq:lambdaK}), 
there exists a $K$, that is the set of indices for which $\lambda_k > 0$, such 
that $\lambda \in \Lambda(K)$. Analogously, every $\lambda$ in the r.h.s. of 
(\ref{eq:lambdaK}), also satisfies $\lambda \geq 0$ and then it is contained in 
the l.h.s. set. Equality (\ref{eq:lambdaK}) implies that 
\begin{equation}\label{eq:OinfLambda}
\begin{array}{l}
\check{\Omega}_\infty^{\alpha} = \bigcup_{\substack{\lambda \geq 0 \\ 
\mathbf{1}^T \lambda = 1}} \left( \bigoplus_{k = 1}^N A^{-k} \lambda_k \bO_k 
\right) = \bigcup_{K \subseteq \N_N} \bigcup_{\substack{\lambda \in \Lambda(K)  
\\ \mathbf{1}^T \lambda = 1}} \Bigg( \bigoplus_{k \in K} A^{-k} \lambda_k 
\bO_k\\ 
\hspace{1cm} \oplus \bigoplus_{k \in \bar{K}} A^{-k} \lambda_0 \bO_k \Bigg),\\
\hat{\Omega}_\infty^{\alpha} = \bigcup_{\substack{\lambda \geq 0 
\\ \mathbf{1}^T \lambda = 1}} \left( \bigoplus_{k = 1}^N \lambda_k A^{-k} \bO_k 
\right) = \bigcup_{K \subseteq \N_N} \bigcup_{\substack{\lambda \in \Lambda(K)  
\\ \mathbf{1}^T \lambda = 1}} \Bigg( \bigoplus_{k \in K} \lambda_k A^{-k} 
\bO_k\\ 
\hspace{1cm} \oplus \bigoplus_{k \in \bar{K}} \lambda_0 A^{-k} \bO_k \Bigg),\\
\end{array}
\end{equation}
with $\lambda_0 = 0$.

\begin{lemma}\label{lem:CuDplusE}
Given the sets $C, D, E \subseteq \R^n$, one has $(C \cup D) \oplus E = (C 
\oplus E) \cup (D \oplus E)$.
\end{lemma}

\begin{proof}
From the definition of Minkowski sum, it follows
\begin{equation*}
 \begin{array}{l}
(C \cup D) \oplus E = \{x \in \R^n : \ x \in C \ \mathrm{ 
or } \ x \in D \} \oplus E  =  \{x + y \in \R^n : \\ 
\ (x \in C, \ y \in E) \ \mathrm{ or } \ (x \in D, \ y \in E) \} =  \{x + y \in 
\R^n : \ x \in C, \\ \ y \in E\} \cup \{x + y \in \R^n : x \in D, \ y \in E) \} 
 =  (C \oplus E) \cup (D \oplus E).
 \end{array}
\end{equation*}
\end{proof}

Now we are in the position of proving that $\Omega_\infty^{\alpha} 
= \hat{\Omega}_\infty^{\alpha}$, even for singular $A$.

\begin{theorem}\label{th:theoremOinf}
Let $\Omega$ and $U$ as in (\ref{eq:OmegaU}) be non-empty and such that $0 \in 
\Omega$ and $0 \in U$. Then the sets $\check{\Omega}_\infty^{\alpha}$ and 
$\hat{\Omega}_\infty^{\alpha}$, defined in (\ref{eq:Oinfs}), are equal.
\end{theorem}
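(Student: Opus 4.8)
The plan is to show the two set inclusions $\check\Omega_\infty^\alpha \subseteq \hat\Omega_\infty^\alpha$ and $\hat\Omega_\infty^\alpha \subseteq \check\Omega_\infty^\alpha$ separately, exploiting the decompositions in \eqref{eq:OinfLambda}. The key observation is that for $k \in \bar K$ we have $\lambda_0 = 0$, so the terms $A^{-k}\lambda_0 \bO_k$ and $\lambda_0 A^{-k}\bO_k$ are \emph{not} equal in general when $A$ is singular: by Lemma~\ref{lem:gammaA} and the discussion preceding it, $\lambda_0 A^{-k}\bO_k = \{0\}$ (since $0 \in \bO_k$, which holds because $0\in\Omega^\alpha$ and $0 \in U$), whereas $A^{-k}\lambda_0\bO_k = A^{-k}\{0\} = \ker A^k$, a subspace. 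So the two unions in \eqref{eq:OinfLambda} differ, term by term, only in that $\check\Omega_\infty^\alpha$ adds the subspace $S_K := \bigoplus_{k \in \bar K}\ker A^k = \ker A^{\max \bar K}$ to each inner Minkowski term, while $\hat\Omega_\infty^\alpha$ adds $\{0\}$. For the $k \in K$ terms, $\lambda_k > 0$, so Lemma~\ref{lem:gammaA} applies and $A^{-k}\lambda_k\bO_k = \lambda_k A^{-k}\bO_k$ exactly. Hence, fixing $K$ and the simplex of $\lambda \in \Lambda(K)$ with $\mathbf 1^T\lambda = 1$, the inner set for $\check\Omega_\infty^\alpha$ is $\bigl(\bigoplus_{k\in K}\lambda_k A^{-k}\bO_k\bigr) \oplus S_K$ and for $\hat\Omega_\infty^\alpha$ it is just $\bigoplus_{k\in K}\lambda_k A^{-k}\bO_k$.

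The inclusion $\hat\Omega_\infty^\alpha \subseteq \check\Omega_\infty^\alpha$ is then immediate, since each inner term of $\hat\Omega_\infty^\alpha$ is contained in the corresponding inner term of $\check\Omega_\infty^\alpha$ (Minkowski-adding a set containing $0$ only enlarges). The reverse inclusion is the substantive direction: given $x \in \check\Omega_\infty^\alpha$, write $x = c + s$ with $c \in \bigoplus_{k\in K}\lambda_k A^{-k}\bO_k$ and $s \in S_K = \ker A^{k^\star}$ where $k^\star = \max \bar K$. I would absorb $s$ into one of the active terms: pick any $k_0 \in K$ (the sum is over a nonempty simplex, so $K \neq \emptyset$) and show $\lambda_{k_0} A^{-k_0}\bO_{k_0} \oplus S_K = \lambda_{k_0} A^{-k_0}\bO_{k_0}$. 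To see this, note $S_K = \ker A^{k^\star}$; if $k^\star \le k_0$ then $\ker A^{k^\star} \subseteq \ker A^{k_0} \subseteq A^{-k_0}\bO_{k_0}$ since $0 \in \bO_{k_0}$, and then Lemma~\ref{lem:CSS} gives the absorption after scaling — here one needs that $\lambda_{k_0} A^{-k_0}\bO_{k_0}$ contains the subspace $\lambda_{k_0}\ker A^{k_0} = \ker A^{k_0} \supseteq S_K$, which again follows from $0 \in \bO_{k_0}$ and $\lambda_{k_0} > 0$. The case $k^\star > k_0$ for \emph{all} $k_0 \in K$ means every index in $K$ is smaller than some index in $\bar K$; I would handle this by instead splitting $S_K$ across the terms, or more cleanly by observing $\bigoplus_{k \in K}\lambda_k A^{-k}\bO_k$ already contains $\bigoplus_{k\in K}\lambda_k\ker A^k$, and comparing this with $S_K$; since $S_K = \ker A^{k^\star}$ with $k^\star \le N$ and $\ker A^j$ is nondecreasing in $j$, as long as $\max K$ is not smaller than $k^\star$ we are done, and if $\max K < k^\star$ I would use Lemma~\ref{lem:CuDplusE} to distribute and re-index, pairing the subspace contribution with the largest active index.

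The main obstacle I anticipate is precisely this bookkeeping of which subspace ($\ker A^j$ for which $j$) gets absorbed where, and making sure the argument is uniform over all subsets $K$; the clean way is probably to prove a single auxiliary claim — that for any nonempty $K$ and any $\lambda \in \Lambda(K)$ with $\mathbf 1^T\lambda = 1$,
\begin{equation*}
\Bigl(\bigoplus_{k\in K}\lambda_k A^{-k}\bO_k\Bigr) \oplus \Bigl(\bigoplus_{k\in\bar K}\ker A^k\Bigr) = \bigoplus_{k\in K}\lambda_k A^{-k}\bO_k,
\end{equation*}
using that the left subspace $\bigoplus_{k\in\bar K}\ker A^k = \ker A^{\max\bar K}$ is contained in $A^{-k}\bO_k$ for every $k \ge \max\bar K$, together with Lemma~\ref{lem:CSS} and the fact that $0\in\bO_k$ — handling the edge case $\bar K = \N_N$ (i.e. $K=\emptyset$, excluded since $\mathbf 1^T\lambda=1$ forces $K\neq\emptyset$) and the case $\max K < \max \bar K$ by noting that then $A^{-k}\bO_k$ for the \emph{largest} $k\in\bar K$ term is being Minkowski-summed into terms with smaller $A^{-j}$, so one instead re-groups using Lemma~\ref{lem:CuDplusE} over the union defining $\bO_k$ itself. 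Once this claim holds, taking the union over $K$ and over the simplex in \eqref{eq:OinfLambda} yields $\check\Omega_\infty^\alpha = \hat\Omega_\infty^\alpha$ directly.
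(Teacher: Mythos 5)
Your setup matches the paper's: the dichotomy $\lambda_k>0$ versus $\lambda_k=0$, the identities $\lambda_0 A^{-k}\bO_k=\{0\}$ and $A^{-k}\lambda_0\bO_k=\ker(A^k)$, and the easy inclusion $\hat{\Omega}_\infty^{\alpha}\subseteq\check{\Omega}_\infty^{\alpha}$ are all correct. The gap is in the substantive direction: your auxiliary claim $\bigl(\bigoplus_{k\in K}\lambda_k A^{-k}\bO_k\bigr)\oplus\bigoplus_{k\in\bar{K}}\ker(A^k)=\bigoplus_{k\in K}\lambda_k A^{-k}\bO_k$ is false exactly in the case you flag, $\max\bar{K}>\max K$. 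Take $n=2$, $A=\left[\begin{array}{cc}0&1\\0&0\end{array}\right]$, $N=2$, $K=\{1\}$, $\lambda_1=1$, with $\bO_1$ bounded: the right-hand side is the strip $A^{-1}\bO_1=\{x\in\R^2:\ Ax\in\bO_1\}\subsetneq\R^2$, while $\bigoplus_{k\in\bar{K}}\ker(A^k)=\ker(A^2)=\R^2$, so the left-hand side is all of $\R^2$. The theorem survives only because the missing points are supplied by a \emph{different} term of the union defining $\hat{\Omega}_\infty^{\alpha}$ (here the $\lambda=(0,1)$ term, which contains $A^{-2}\bO_2\supseteq\ker(A^2)$); no absorption into the $K$-indexed sum alone can work. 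Your proposed repairs for this case (splitting $S_K$ across terms, re-indexing with Lemma~\ref{lem:CuDplusE}, re-grouping over ``the union defining $\bO_k$'') do not yield an argument: Lemma~\ref{lem:CuDplusE} distributes a Minkowski sum over a union, but there is no union inside a fixed-$(K,\lambda)$ term, and $\bO_k$ is a Minkowski sum, not a union.

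The paper's resolution is global rather than term-wise: it Minkowski-adds the full subspace $\bigoplus_{k=1}^N\ker(A^k)$ to the entire set $\hat{\Omega}_\infty^{\alpha}$, which changes nothing by Lemma~\ref{lem:CSS} because that subspace lies inside the convex hull $\co\bigl(\bigcup_{k}A^{-k}\bO_k\bigr)$, see (\ref{eq:kerAinObar}); it then pushes this global subspace inside the union via Lemma~\ref{lem:CuDplusE}, lets it swallow the per-$K$ discrepancy $\bigoplus_{k\in\bar{K}}\ker(A^k)$ (a subspace of it), and finally re-absorbs it into $\check{\Omega}_\infty^{\alpha}$, each of whose terms contains $\bigoplus_{k=1}^N\ker(A^k)$ by (\ref{eq:kerAinO}). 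If you want to keep your two-inclusion structure, replace your auxiliary claim by the correct one-sided statement: for every $(K,\lambda)$,
\begin{equation*}
\Bigl(\bigoplus_{k\in K}\lambda_kA^{-k}\bO_k\Bigr)\oplus\bigoplus_{k\in\bar{K}}\ker(A^k)\subseteq\Bigl(\bigoplus_{k=1}^{N}\lambda_kA^{-k}\bO_k\Bigr)\oplus\bigoplus_{k=1}^{N}\ker(A^k)\subseteq\hat{\Omega}_\infty^{\alpha}\oplus\bigoplus_{k=1}^{N}\ker(A^k)=\hat{\Omega}_\infty^{\alpha},
\end{equation*}
whose left-hand side is precisely the generic term of $\check{\Omega}_\infty^{\alpha}$; taking unions over $K$ and $\lambda$ then gives $\check{\Omega}_\infty^{\alpha}\subseteq\hat{\Omega}_\infty^{\alpha}$, which is exactly the mechanism of the paper's proof.
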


\begin{proof}
If $A$ is nonsingular, the equality follows directly from 
Lemma~\ref{lem:gammaA}. Consider now the case of $A$ singular. The sets 
$\lambda_k A^{-k} \bO_k$ and $A^{-k} \lambda_k \bO_k$, involved in 
(\ref{eq:Oinfs}), are equal for every $k \in \N_N$ provided $\lambda_k > 0$, 
from Lemma~\ref{lem:gammaA}. On the other hand, this is no more true if 
$\lambda_k = \lambda_0 = 0$, in fact
\begin{equation}\label{eq:th1}
\begin{array}{ll}
 \lambda_0 A^{-k} \bO_k \hspace{-0.2cm} & = \{\lambda_0 x \in \R^n : \ x \in 
A^{-k} \bO_k \}\\
& = \{\lambda_0 x \in \R^n : \ A^k x \in \bO_k \} = \{0\},\\
A^{-k} \lambda_0 \bO_k \hspace{-0.2cm} & = \{x \in \R^n : \ A^k x \in \lambda_0 
\bO_k\} \\
& = \left\{x \in \R^n : \ A^k x \in \{0\}\right\} = \ker(A^k),
\end{array}
\end{equation}
as the set of $x$ such that $A^k x \in \bO_k$ is non-empty from $0 \in \Omega$ 
and $0 \in U$. Moreover, for every $k \in \N_N$ one has 
\begin{equation}\label{eq:kerA}
\begin{array}{cl}
\ker(A^k) & \hspace{-0.2cm} = \{x \in \R^n : \ A^k x \in \{0\}\}\\
& \subseteq \{x \in \R^n : \ A^k x \in \lambda_k \bO_k\} = A^{-k} \lambda_k 
\bO_k,\\
\end{array}
\end{equation}
since $0 \in \lambda_k \bO_k$ for every $\lambda_k \geq 0$, from $0 \in \Omega$ 
and $0 \in U$. Inclusion (\ref{eq:kerA}) with $\lambda_k = 1$ implies 
$\bigcup_{k = 1}^N \ker(A^k) \subseteq \bigcup_{k = 1}^N A^{-k} \bO_k$ and then
\begin{equation}\label{eq:kerAinObar}
\bigoplus_{k = 1}^N \ker(A^k) = \co \left( \bigcup_{k = 1}^N \ker(A^k) \right) 
\subseteq \co \left( \bigcup_{k = 1}^N A^{-k} \bO_k \right) = 
\hat{\Omega}_\infty^\alpha,
\end{equation}
where the first equality follows from Lemma~\ref{lem:S1S2} and the fact that 
$\ker(A^k)$ are subspaces. Moreover, (\ref{eq:kerA}) yields
\begin{equation}\label{eq:kerAinO}
\bigoplus_{k = 1}^N \ker(A^k) = \hspace{-0.2cm} \bigcup_{\substack{\lambda \geq 
0 \\ \mathbf{1}^T \lambda = 1}} \hspace{-0.2cm} \left( \bigoplus_{k = 1}^N 
\ker(A^k) \right) \hspace{-0.1cm} \subseteq \hspace{-0.2cm} 
\bigcup_{\substack{\lambda \geq 0 \\ \mathbf{1}^T \lambda = 1}} \hspace{-0.2cm} 
\left( \bigoplus_{k = 1}^N A^{-k} \lambda_k \bO_k \right) \hspace{-0.1cm} = 
\Omega_\infty^\alpha.
\end{equation}
Then, denoting the value $\lambda_0 = 0$, one has  
\begin{equation*}\label{eq:theoremOinfZ}
\begin{array}{l}
 \hat{\Omega}_\infty^{\alpha} = \co \left(\bigcup_{k = 1}^{N} 
A^{-k} \bO_k \right) = \co \left(\bigcup_{k = 1}^{N} A^{-k} \bO_k \right) 
\oplus \bigoplus_{k = 1}^N \ker(A^k) \\
 \quad = \bigcup_{\substack{\lambda \geq 0 \\ \mathbf{1}^T \lambda = 1}} \left( 
\bigoplus_{k = 1}^N \lambda_k A^{-k} \bO_k \right) \oplus \bigoplus_{k = 1}^N 
\ker(A^k) \\
 \quad = \bigcup_{K \subseteq \N_N} \bigcup_{\substack{\lambda \in 
\Lambda(K)  \\ \mathbf{1}^T \lambda = 1}} \hspace{-0.1cm} \left( \bigoplus_{k 
\in K} \lambda_k A^{-k} \bO_k \oplus \bigoplus_{k \in \bar{K}} \lambda_0 A^{-k} 
\bO_k \hspace{-0.1cm} \right) \hspace{-0.1cm} \oplus \bigoplus_{k = 1}^N 
\ker(A^k)\\
 \quad = \bigcup_{K \subseteq \N_N} \bigcup_{\substack{\lambda \in 
\Lambda(K)  \\ \mathbf{1}^T \lambda = 1}} \left( \bigoplus_{k \in K} \lambda_k 
A^{-k} \bO_k \right) \oplus \bigoplus_{k = 1}^N \ker(A^k)\\
 \quad = \bigcup_{K \subseteq \N_N} \bigcup_{\substack{\lambda \in 
\Lambda(K)  \\ \mathbf{1}^T \lambda = 1}} \left( \bigoplus_{k \in K} A^{-k} 
\lambda_k \bO_k \right) \oplus \bigoplus_{k = 1}^N \ker(A^k)\\
 \quad = \bigcup_{K \subseteq \N_N} \bigcup_{\substack{\lambda \in 
\Lambda(K)  \\ \mathbf{1}^T \lambda = 1}} \left( \bigoplus_{k \in K} A^{-k} 
\lambda_k \bO_k \oplus \bigoplus_{k = 1}^N \ker(A^k)\right) \\
 \quad = \bigcup_{K \subseteq \N_N} \bigcup_{\substack{\lambda \in 
\Lambda(K)  \\ \mathbf{1}^T \lambda = 1}} \left( \bigoplus_{k \in K} A^{-k} 
\lambda_k \bO_k \oplus \bigoplus_{k = 1}^N \ker(A^k) \oplus \bigoplus_{k \in 
\bar{K}} \ker(A^k)\right) \\
 \quad = \hspace{-0.1cm} \bigcup_{K \subseteq \N_N} \bigcup_{\substack{\lambda 
\in \Lambda(K)  \\ \mathbf{1}^T \lambda = 1}} \hspace{-0.1cm} \left( 
\hspace{-0.1cm} \left( \bigoplus_{k \in K} A^{-k} \lambda_k \bO_k \oplus 
\bigoplus_{k \in \bar{K}} \ker(A^k) \hspace{-0.1cm} \right) \hspace{-0.1cm} 
\oplus \bigoplus_{k = 1}^N \ker(A^k) \hspace{-0.1cm} \right) \\
 \quad = \bigcup_{K \subseteq \N_N} \bigcup_{\substack{\lambda \in 
\Lambda(K)  \\ \mathbf{1}^T \lambda = 1}} \hspace{-0.1cm} \left( \bigoplus_{k 
\in K} A^{-k} \lambda_k \bO_k \oplus \bigoplus_{k \in \bar{K}} A^{-k} \lambda_0 
\bO_k \hspace{-0.1cm} \right) \hspace{-0.1cm} \oplus \bigoplus_{k = 1}^N 
\ker(A^k) \\
 \quad = \bigcup_{\substack{\lambda \geq 0 \\ \mathbf{1}^T \lambda 
= 1}} \left( \bigoplus_{k = 1}^N A^{-k} \lambda_k \bO_k \right) \oplus 
\bigoplus_{k = 1}^N \ker(A^k) \\
\quad = \bigcup_{\substack{\lambda \geq 0 \\ \mathbf{1}^T \lambda = 1}} \left( 
\bigoplus_{k = 1}^N A^{-k} \lambda_k \bO_k \right) = 
\check{\Omega}_\infty^{\alpha} 
\end{array}
\end{equation*}
where: the second equality holds from Lemma~\ref{lem:CSS} and 
(\ref{eq:kerAinObar}); the forth from (\ref{eq:OinfLambda}); the fifth from 
(\ref{eq:th1}); the sixth from Lemma~\ref{lem:gammaA}; the seventh from 
Lemma~\ref{lem:CuDplusE}; the eighth from Lemma~\ref{lem:CSS}; the tenth from 
(\ref{eq:th1}); the eleventh from (\ref{eq:OinfLambda}); the twelfth and the 
last one from Lemma~\ref{lem:CSS} and (\ref{eq:kerAinO}).
\end{proof}

Theorem~\ref{th:theoremOinf} implies that checking if $x \in 
\Omega_{\infty}^{\alpha}$ resorts to solve an LP feasibility problem in the 
variables $x, z_k, v_{i,k}, \lambda_k$ for all $i \in \N_k$ and $k \in \N_N$, 
then in a space of dimension $n + Nn + 0.5 N(N+1) m + N$.

\begin{remark}\label{rem:dimensions}
From Theorem~\ref{th:theoremOinf}, also the stop condition 
(\ref{eq:invariance_cond1}), employed in Algorithm~\ref{alg:I}, can be posed as 
an LP problem, once $\alpha_N$ is fixed. In fact, by reasonings analogous to 
those of Theorem~\ref{th:Ninclusion}, the inclusion $\alpha_N \Omega \subseteq 
\Omega_{\infty}^{\alpha}$ can be posed in form of the LP problem 
(\ref{eq:FinalCondLP}), by appropriately defining the matrices $\bG$, $\bg$, 
$\bH$, $\bh$ from (\ref{eq:Inv_for_Anonsin}). Such an LP problem could be also 
used to approximate the optimal $\alpha_N$, by griding it for instance. This 
would also have the benefit of leading to smaller values of $N$, since the the 
stop condition (\ref{eq:invariance_cond1}) holds if the N-step one 
(\ref{eq:invariance_condN}) is satisfied, but the inverse is not true in 
general. On the other hand, the dimension of such an LP problem might be much 
bigger than for the N-step condition, in fact: $\bar{n}$ would be $(1+N)n + 
0.5N(N+1)m + N$ instead of $n+Nm$ defined for the N-step; $n_{\bg} = 2n + Nn_h + 
0.5N(N+1)n_g + N + 1$ instead of $n_h + Nn_g$ and $n_{\bh} = n_h + 2nN + N(N+1)m 
+ 2N$ instead of $n_h + 2Nm$.
\end{remark}

\subsection{State constraints}\label{sec:state_constr}

Concerning the state constraints, recall first that every smaller multiple of 
$\Omega_{\infty}^{\alpha}$, i.e. every $\sigma \Omega_{\infty}^{\alpha}$ with 
$\sigma \in [0, \, 1]$, is still control invariant, under 
Assumption~\ref{ass:XU} and if $0 \in \Omega$. Then the greatest invariant 
multiple of $\Omega_{\infty}^{\alpha}$ contained in $X$ is given by 
$\Omega_{\infty}^{\alpha, \sigma} = \sigma \Omega_{\infty}^{\alpha}$ with 
\begin{equation*}\label{eq:OmegaalphasigmaZ}
\begin{array}{ll}
\sigma = \hspace{-0.2cm} & \max_{\delta \in [0, \, 1]} \delta\\
& \mathrm{s.t. } \ \ \delta \Omega_{\infty}^{\alpha} \subseteq X,
\end{array}
\end{equation*}
which is equivalent, for $X$ as in (\ref{eq:XU}), to 
\begin{equation}\label{eq:LPOinX}
\sigma = \min\left\{1, \frac{f_1}{\delta_1}, \ldots, 
\frac{f_{n_f}}{\delta_{n_f}} \right\} \quad \mathrm{with } \ 
\begin{array}{ll}
\delta_i = \hspace{-0.2cm} & \max_{x} F_i x\\
& \mathrm{s.t. } \ \ x \in \Omega_{\infty}^{\alpha},
\end{array}
\end{equation}
for every $i \in \N_{n_f}$. Then the scaling factor $\sigma$ can be obtained by 
computing $\delta_i$ solving (\ref{eq:LPOinX}), which are $n_f$ LP problems, 
in both cases of $A$ singular and nonsingular, from 
Theorem~\ref{th:theoremOinf}. In fact, the constraint $x \in 
\Omega_{\infty}^{\alpha}$ is a set of linear constraints in a space of dimension 
$(1+N)n + 0.5 N(N+1) m + N$, see (\ref{eq:Inv_for_Anonsin}) and 
Remark~\ref{rem:dimensions}.

The method presented above does not explicitly take into account the shape of 
$X$ in computing $\Omega_\infty^{\alpha,\sigma}$, and then could lead to some 
conservatism. Alternatively, the state constraints could be considered by 
defining the analogous of the preimage set $\Omega_k^{\alpha}$ as in 
(\ref{eq:Omegaka}). Given the constraint sets $X \subseteq \R^n$ and $U \in 
\R^m$ and $\sigma \in \R_+$, the set defined by
\begin{equation}\label{eq:OmegakasXU}
\begin{array}{l}
\Omega_k^{\sigma}(X,U)  = \{x \in \R^n: \ A^k x + \sum_{i = 1}^{k} A^{i-1}Bu_i 
\in \sigma \Omega,\\
\ \ A^{k-1} x + A^{k-2} B u_k + \ldots + B u_2 \in X, \ \ldots \ A x + Bu_k \in 
X,\\
\ \ x \in X, \ u_i \in U \ \forall i \in \N_k\}
\end{array}
\end{equation}
is the set of initial states $x \in X$ for which a sequence of length $k$ of 
input $u_i \in U$, with $i \in \N_k$, exists such that the generated trajectory 
is maintained in $X$ and ends in $\sigma \Omega$ at time $k$. Hence, the set 
given by 
\begin{equation*}\label{eq:OmegainfasXUZ}
\Omega_{\infty}^{\sigma}(X,U) = \co\left( \bigcup_{k = 1}^N 
\Omega_k^{\sigma}(X,U) \right) 
\end{equation*}
is the set of states that can be steered in $\sigma \Omega$ in $N$ steps at 
most through an admissible trajectory that does not violate the state 
constraints $X$, if $\sigma \Omega \subseteq \Omega_{\infty}^{\sigma}(X,U)$. 
Hence, solving the problem 
\begin{equation}\label{eq:LPOmegainfasXU}
\begin{array}{ll}
\hat{\sigma} = \hspace{-0.2cm} & \displaystyle \max_{\sigma \in \R_+} \sigma\\
& \mathrm{s.t. } \ \ \sigma \Omega \subseteq \Omega_{\infty}^{\sigma}(X,U),
\end{array}
\end{equation}
leads to the control invariant set $\Omega_{\infty}^{\hat{\sigma}}(X,U)$ 
contained in $X$. The problem (\ref{eq:LPOmegainfasXU}) does not yield to a 
convex problem, as for the sufficient condition (\ref{eq:invariance_cond2}) and 
the considerations that follow it. Nevertheless, the following problem 
\begin{equation}\label{eq:LPOmegainfasXUmu}
\begin{array}{ll}
\mu_N = \hspace{-0.2cm} & \displaystyle \min_{\mu \in \R_+} \mu \\
& \mathrm{s.t. } \ \ \Omega \subseteq \Omega_{\infty}^{1}(\mu X,\mu U),
\end{array}
\end{equation}
leads to an LP analogous to (\ref{eq:FinalCondLPbeta}) and equivalent to 
(\ref{eq:LPOmegainfasXU}), with $\mu = \sigma^{-1}$, since it can be proved 
that $\sigma^{-1} \Omega_k^{\sigma}(X,U) = \Omega_k^{1}(\mu X,\mu U)$ for all $k 
\in \N$.

Finally, note that $\Omega_k^{1}(\mu X, \mu U)$ as in (\ref{eq:OmegakasXU}) is 
the projection on $\R^n$ of a polytope on a space of dimension $n + km + 1$. 
Then the constraint $\Omega \subseteq \Omega_{\infty}^{1}(\mu X,\mu U)$ would 
lead to linear constraints analogous to those of Theorem~\ref{th:Ninclusion} 
but with $\bar{n} = (1+N)n + 0.5N(N+1)m + N$, $n_{\bg} = 2n + Nn_h + 
0.5N(N+1)n_g + N + 1$ and $n_{\bh} = n_h + 2nN + N(N+1)m + 2N$, see also 
Remark~\ref{rem:dimensions}. The solution of the lower dimensional optimization 
problem 
\begin{equation}\label{eq:LPOmegaNasXUmu}
\begin{array}{ll}
\mu_N = \hspace{-0.2cm} & \displaystyle \min_{\mu \in \R_+} \mu \\
& \mathrm{s.t. } \ \ \Omega \subseteq \Omega_{N}^{1}(\mu X,\mu U),
\end{array}
\end{equation}
leads to a more conservative control invariant set contained in $X$. Note that 
(\ref{eq:LPOmegaNasXUmu}) yields to an LP problem analogous of the N-step 
condition in absence of state constraints (\ref{eq:FinalCondLPbeta}).

\section{Numerical examples}

The different results presented in this paper are illustrated through numerical 
examples.

\subsection{Example 1}
The first example concerns a system with $n = 2$ and $m = 1$. The main interest 
relies in fact that both the Minkowski sum and the convex hull can be computed 
efficiently in this low dimensional system, using for instance the MPT toolbox 
for managing polytopes, \cite{MPT3}. This would allow us to explicitly compute 
outer approximations of the maximal invariant set and the sets 
$\Omega_N^{\alpha}$ and $\Omega_\infty^{\alpha}$ and then to give a graphical 
illustration of our results in terms of conservatism. 

We consider the systems (\ref{eq:system}) with 
\begin{equation}
 A = \left[\begin{array}{cc}
 1.2 & 1\\
 0 & 1.2
\end{array}\right], \quad B = \left[\begin{array}{c}
0.5\\
0.3
\end{array}\right] 
\end{equation}
and constraints on the input $U = \{u \in \R: \, \|u\| \leq 2\}$.  We consider 
first no constraints in the state. The initial set $\Omega$ has been chosen to 
be the unitary box, i.e. $\Omega = \B^2$. Then the maximal value of $\alpha$ 
such that sets $\Omega^{\alpha}$ and $\Omega_N^{\alpha}$ satisfy 
(\ref{eq:invariance_cond2}) is obtained for different values of $N$, by solving 
(\ref{eq:FinalCondLPbeta}). Given such $\alpha$, the set 
$\Omega_{\infty}^{\alpha}$, defined in (\ref{eq:Omegaconvex}), is a control 
invariant set. To give an intuition of the method results and of the 
conservatism with respect to the maximal control invariant, a sequence of 
non-increasing nested outer approximations of the maximal control invariant set 
is computed, by starting with $\Sigma_0$ big enough (i.e. containing the maximal 
control invariant set) and iterating $\Sigma_k = \Sigma_0 \cap A^{-1}(\Sigma_k 
\oplus(- B U))$, \cite{BlanchiniBook}. The sets $\Sigma_{0}$ and $\Sigma_{10k}$ 
for $k \in \N_5$ are depicted in Figure~\ref{fig:1.5} in thin lines while 
$\Sigma_{60}$ is the white polytope with thick borders. The set 
$\Omega^{\alpha}$ is the dark-gray box and both $\Omega_N^{\alpha}$ and 
$\Omega_\infty^{\alpha}$ are represented in light gray, for $N = 5$. As can be 
noticed, the sets $\Omega_N^{\alpha}$ and $\Omega_\infty^{\alpha}$ are very 
close, where clearly $\Omega_N^{\alpha} \subseteq \Omega_\infty^{\alpha}$. 

\begin{figure}[H]
    \newlength\fheight
    \newlength\fwidth
    \setlength\fheight{3.5cm}
    \setlength\fwidth{8cm}
    \input{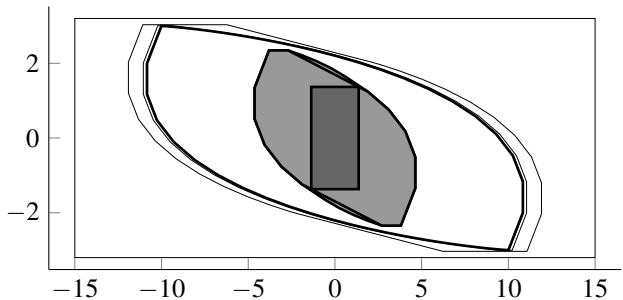}	
    \vspace{-0.35cm}
    \caption{Sets $\Sigma_{0}$ and $\Sigma_{10k}$ for $k \in \N_5$ in thin 
lines; $\Sigma_{60}$ in white with thick lines; $\Omega^{\alpha}$ in dark gray; 
$\Omega_N^{\alpha}$ and $\Omega_\infty^{\alpha}$ in light gray, for $N = 5$.}
	\label{fig:1.5}
\end{figure}

The sets $\Sigma_{60}$ and $\Omega_{\infty}^{\alpha}$ for $N = 5, 10, 15, 20$ 
are drawn in Figure~\ref{fig:1.5-20}, in white the former and gray the latter.

\begin{figure}[H]
    \setlength\fheight{3.5cm}
    \setlength\fwidth{8cm}
    \input{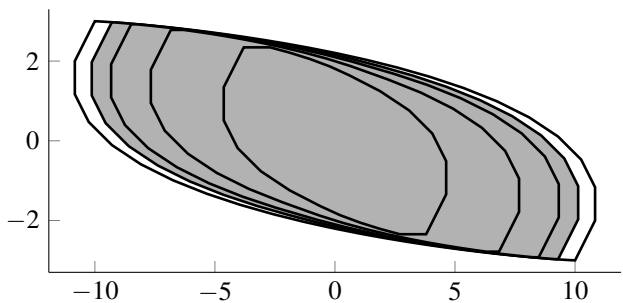}	
    \vspace{-0.35cm}
    \caption{Set $\Sigma_{60}$ in white with thick lines and 
$\Omega_\infty^{\alpha}$ in light gray, for $N = 5$ (inner), $N = 10, 15$ and 
$N = 20$ (outer).}
	\label{fig:1.5-20}
\end{figure}

Finally, the sets $\Sigma_{0}$ and $\Sigma_{10k}$ for $k \in \N_6$ are depicted 
in Figure~\ref{fig:1.5}, in white, together with $\Omega^{\alpha}$, in dark 
gray, and $\Omega_{\infty}^{\alpha}$, in light gray, for $N = 40$. The set 
$\Omega_{\infty}^{\alpha}$ is very close to the outer approximation of the 
maximal control invariant set $\Sigma_{60}$.

\begin{figure}[H]
    \setlength\fheight{3.5cm}
    \setlength\fwidth{8cm}
    \input{CinvN40.tikz}	
    \vspace{-0.35cm}
    \caption{Sets $\Sigma_{0}$ and $\Sigma_{10k}$ for $k \in \N_5$ in thin 
lines; $\Sigma_{60}$ in white with thick lines; $\Omega^{\alpha}$ in dark gray; 
$\Omega_N^{\alpha}$ and $\Omega_\infty^{\alpha}$ in light gray, for $N = 40$.}
	\label{fig:2}
\end{figure}

\subsection{Example 2: singular matrix}

Consider the systems (\ref{eq:system}) with singular transition matrix 
\begin{equation}
 A = \left[\begin{array}{cc}
 1.2 & 1\\
 0 & 0
\end{array}\right], \quad B = \left[\begin{array}{c}
0.5\\
0.3
\end{array}\right] 
\end{equation}
and input constraints sets and initial set as for Example 1, $U = \{u \in \R: \, 
\|u\| \leq 2\}$ and $\Omega = \B^n$. The sets $\Sigma_{i}$ for $i \in \N_{60}$ 
have been computed starting with $\Sigma_0 = 1000 \B^2$. Figure~\ref{fig:Ex2_1} 
shows the outer approximations of the maximal control invariant set $\Sigma_{i}$ 
with $i = 40, 50, 60$, the control invariant sets $\Omega_\infty^{\alpha}$ for 
different values of $N$, in particular $N \in \N_{10}$, and $\Omega^{\alpha}$ 
related to $N = 10$. 
\begin{figure}[H]
    \setlength\fheight{3.5cm}
    \setlength\fwidth{8cm}
%
%
\begin{tikzpicture}

\begin{axis}[%
width=0.951\fwidth,
height=\fheight,
at={(0\fwidth,0\fheight)},
scale only axis,
xmin=-15,
xmax=15,
ymin=-15,
ymax=15,
axis background/.style={fill=white},
axis x line*=bottom,
axis y line*=left
]

\addplot[area legend, line width=0.1pt, draw=black, fill=white, forget plot]
table[row sep=crcr] {%
x	y\\
-1000	-1000\\
1000	-1000\\
1000	1000\\
-1000	1000\\
}--cycle;

\addplot[area legend, line width=0.1pt, draw=black, fill=white, forget plot]
table[row sep=crcr] {%
x	y\\
665.619795106607	-1000\\
1000	-1000\\
1000	-998.743754127926\\
-665.619795106606	1000\\
-1000	1000\\
-1000	998.743754127928\\
}--cycle;

\addplot[area legend, line width=0.1pt, draw=black, fill=white, forget plot]
table[row sep=crcr] {%
x	y\\
799.957952455182	-1000\\
866.708714211486	-1000\\
-799.957952455181	1000\\
-866.708714211485	1000\\
}--cycle;

\addplot[area legend, line width=0.1pt, draw=black, fill=white, forget plot]
table[row sep=crcr] {%
x	y\\
821.654314862112	-1000\\
845.012351804557	-1000\\
-821.65431486211	1000\\
-845.012351804555	1000\\
}--cycle;

\addplot[area legend, line width=0.1pt, draw=black, fill=white, forget plot]
table[row sep=crcr] {%
x	y\\
825.158398519231	-1000\\
841.508268147436	-1000\\
-825.158398519231	1000\\
-841.508268147435	1000\\
}--cycle;

\addplot[area legend, line width=0.1pt, draw=black, fill=white, forget plot]
table[row sep=crcr] {%
x	y\\
825.72432759277	-1000\\
840.942339073898	-1000\\
-825.724327592769	1000\\
-840.942339073897	1000\\
}--cycle;

\addplot[area legend, line width=1.0pt, draw=black, fill=white, forget plot]
table[row sep=crcr] {%
x	y\\
825.72432759277	-1000\\
840.942339073898	-1000\\
-825.724327592769	1000\\
-840.942339073897	1000\\
}--cycle;

\addplot[area legend, line width=1.0pt, draw=black, fill=white!60!black, forget plot]
table[row sep=crcr] {%
x	y\\
826.525663435569	-1000\\
840.141003231098	-1000\\
-826.525663435569	1000\\
-840.141003231098	1000\\
}--cycle;

\addplot[area legend, line width=1.0pt, draw=black, fill=white!60!black, forget plot]
table[row sep=crcr] {%
x	y\\
826.664129456016	-1000\\
840.00253721065	-1000\\
-826.664129456017	1000\\
-840.002537210651	1000\\
}--cycle;

\addplot[area legend, line width=1.0pt, draw=black, fill=white!60!black, forget plot]
table[row sep=crcr] {%
x	y\\
826.830288680553	-1000\\
839.836377986114	-1000\\
-826.830288680553	1000\\
-839.836377986114	1000\\
}--cycle;

\addplot[area legend, line width=1.0pt, draw=black, fill=white!60!black, forget plot]
table[row sep=crcr] {%
x	y\\
827.029679749997	-1000\\
839.63698691667	-1000\\
-827.029679749997	1000\\
-839.63698691667	1000\\
}--cycle;

\addplot[area legend, line width=1.0pt, draw=black, fill=white!60!black, forget plot]
table[row sep=crcr] {%
x	y\\
827.26894903333	-1000\\
839.397717633337	-1000\\
-827.26894903333	1000\\
-839.397717633337	1000\\
}--cycle;

\addplot[area legend, line width=1.0pt, draw=black, fill=white!60!black, forget plot]
table[row sep=crcr] {%
x	y\\
827.556072173329	-1000\\
839.110594493338	-1000\\
-827.556072173329	1000\\
-839.110594493338	1000\\
}--cycle;

\addplot[area legend, line width=1.0pt, draw=black, fill=white!60!black, forget plot]
table[row sep=crcr] {%
x	y\\
827.900619941328	-1000\\
838.766046725339	-1000\\
-827.900619941328	1000\\
-838.766046725339	1000\\
}--cycle;

\addplot[area legend, line width=1.0pt, draw=black, fill=white!60!black, forget plot]
table[row sep=crcr] {%
x	y\\
828.314077262927	-1000\\
838.35258940374	-1000\\
-828.314077262927	1000\\
-838.35258940374	1000\\
}--cycle;

\addplot[area legend, line width=1.0pt, draw=black, fill=white!60!black, forget plot]
table[row sep=crcr] {%
x	y\\
828.810226048845	-1000\\
837.856440617821	-1000\\
-828.810226048846	1000\\
-837.856440617822	1000\\
}--cycle;

\addplot[area legend, line width=1.0pt, draw=black, fill=white!60!black, forget plot]
table[row sep=crcr] {%
x	y\\
-837.261062074719	1000\\
-829.405604591948	1000\\
837.261062074719	-1000\\
829.405604591948	-1000\\
}--cycle;

\addplot[area legend, line width=1.0pt, draw=black, fill=white!60!black, forget plot]
table[row sep=crcr] {%
x	y\\
3.71327448966266	-3.71327448966266\\
3.71327448966266	3.71327448966266\\
-3.71327448966266	3.71327448966266\\
-3.71327448966266	-3.71327448966266\\
}--cycle;

\addplot[area legend, line width=1.0pt, draw=black, fill=white!40!black, forget plot]
table[row sep=crcr] {%
x	y\\
3.71327448966266	-3.71327448966266\\
3.71327448966266	3.71327448966266\\
-3.71327448966266	3.71327448966266\\
-3.71327448966266	-3.71327448966266\\
}--cycle;
\end{axis}
\end{tikzpicture}%
    \vspace{-0.35cm}
    \caption{Sets $\Sigma_{40}$ and $\Sigma_{50}$ in thin lines; $\Sigma_{60}$ 
in white with thick lines; $\Omega_\infty^{\alpha}$ in light gray, for $N \in 
\N_{10}$, and $\Omega^{\alpha}$ for $N = 10$ in dark gray.}
	\label{fig:Ex2_1}
\end{figure}
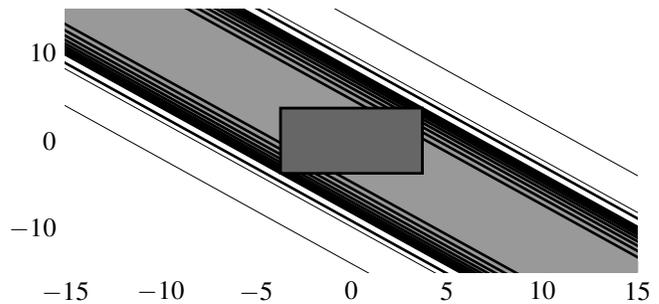

The inner and outer approximation of the maximal invariant appear to be rather 
close for $N = 10$.

\subsection{Example 3: state constraints}

In this example we consider the same dynamics and same sets $\Omega$ and $U$ of 
Example 1 and the state constraint set given by $X = \{x \in \R^2: \ -10 \leq 
x_1 \leq 5, \ -1 \leq x_2 \leq 2\}$. Both method for taking into account the 
state constraints illustrated in Section~\ref{sec:state_constr} are applied 
using $N = 15$. Figure~\ref{fig:Ex3_1} shows the set 
$\Omega_{\infty}^{\alpha,\sigma}$ obtained by solving (\ref{eq:LPOinX}) 
in middle shade gray and also $\Omega_{\infty}^{\sigma}$ induced by the solution 
to (\ref{eq:LPOmegaNasXUmu}) in light gray (besides the sets $X$, $\Sigma_k$ 
and $\sigma 
\Omega$). 

\begin{figure}[H]
    \setlength\fheight{3.5cm}
    \setlength\fwidth{8cm}
%
%
\begin{tikzpicture}

\begin{axis}[%
width=0.951\fwidth,
height=\fheight,
at={(0\fwidth,0\fheight)},
scale only axis,
xmin=-10.75,
xmax=5.75,
ymin=-1.15000000206324,
ymax=2.15000000009825,
axis background/.style={fill=white},
axis x line*=bottom,
axis y line*=left
]

\addplot[area legend, line width=0.1pt, draw=black, fill=white, forget plot]
table[row sep=crcr] {%
x	y\\
-10	-1\\
5	-1\\
5	2\\
-10	2\\
}--cycle;

\addplot[area legend, line width=0.1pt, draw=black, fill=white, forget plot]
table[row sep=crcr] {%
x	y\\
-7.89946305407726	0.47222222222222\\
-7.41720996765751	-0.106481481481479\\
-6.61345482362458	-0.588734567901236\\
-5.60876089358341	-0.990612139917704\\
-5.57746802664243	-1\\
3.60250930481641	-1\\
4.15806486037197	-0.333333333333333\\
4.15806486037197	0.222222222222222\\
3.77226239123617	0.685185185185184\\
3.12925827600983	1.07098765432099\\
2.3255031319769	1.39248971193416\\
1.43244186082921	1.66040809327846\\
0.502169703383693	1.88367341106539\\
-0.0794632412893667	2\\
-7.20501860963282	2\\
-7.89946305407726	1.16666666666667\\
}--cycle;

\addplot[area legend, line width=0.1pt, draw=black, fill=white, forget plot]
table[row sep=crcr] {%
x	y\\
-7.44805795545974	0.472222222222219\\
-6.96580486903999	-0.106481481481477\\
-6.16204972500705	-0.588734567901239\\
-5.1573557949659	-0.990612139917701\\
-5.1260629280249	-1\\
3.37680675550765	-1\\
3.9323623110632	-0.333333333333334\\
3.9323623110632	0.222222222222223\\
3.5465598419274	0.685185185185184\\
2.90355572670107	1.07098765432099\\
2.09980058266814	1.39248971193416\\
1.20673931152045	1.66040809327846\\
0.276467154074933	1.88367341106539\\
-0.305165790598127	2\\
-6.75361351101529	2\\
-7.44805795545974	1.16666666666666\\
}--cycle;

\addplot[area legend, line width=0.1pt, draw=black, fill=white, forget plot]
table[row sep=crcr] {%
x	y\\
-7.37515351188807	0.472222222222227\\
-6.89290042546831	-0.106481481481483\\
-6.0891452814354	-0.588734567901228\\
-5.08445135139421	-0.990612139917706\\
-5.05315848445323	-1\\
3.34035453372181	-1\\
3.89591008927737	-0.333333333333332\\
3.89591008927737	0.222222222222221\\
3.51010762014156	0.685185185185188\\
2.86710350491523	1.07098765432099\\
2.0633483608823	1.39248971193416\\
1.17028708973461	1.66040809327846\\
0.240014932289108	1.88367341106539\\
-0.341618012383967	2\\
-6.68070906744362	2\\
-7.37515351188807	1.16666666666667\\
}--cycle;

\addplot[area legend, line width=0.1pt, draw=black, fill=white, forget plot]
table[row sep=crcr] {%
x	y\\
-7.36337903723377	0.472222222222219\\
-6.88112595081402	-0.106481481481477\\
-6.07737080678109	-0.588734567901233\\
-5.07267687673991	-0.990612139917707\\
-5.04138400979893	-1\\
3.33446729639466	-1\\
3.89002285195022	-0.333333333333333\\
3.89002285195022	0.222222222222223\\
3.50422038281442	0.685185185185184\\
2.86121626758808	1.07098765432099\\
2.05746112355515	1.39248971193416\\
1.16439985240747	1.66040809327846\\
0.234127694961945	1.88367341106539\\
-0.347505249711116	2\\
-6.66893459278932	2\\
-7.36337903723376	1.16666666666666\\
}--cycle;

\addplot[area legend, line width=0.1pt, draw=black, fill=white, forget plot]
table[row sep=crcr] {%
x	y\\
-7.3614773938415	0.472222222222228\\
-6.87922430742175	-0.106481481481481\\
-6.07546916338882	-0.588734567901236\\
-5.07077523334767	-0.990612139917698\\
-5.03948236640666	-1\\
3.33351647469853	-1\\
3.88907203025409	-0.333333333333333\\
3.88907203025408	0.222222222222222\\
3.50326956111828	0.685185185185184\\
2.86026544589194	1.07098765432099\\
2.05651030185901	1.39248971193416\\
1.16344903071133	1.66040809327846\\
0.233176873265815	1.88367341106539\\
-0.348456071407245	2\\
-6.66703294939706	2\\
-7.3614773938415	1.16666666666666\\
}--cycle;

\addplot[area legend, line width=1.0pt, draw=black, fill=white, forget plot]
table[row sep=crcr] {%
x	y\\
-7.3614773938415	0.472222222222228\\
-6.87922430742175	-0.106481481481481\\
-6.07546916338882	-0.588734567901236\\
-5.07077523334767	-0.990612139917698\\
-5.03948236640666	-1\\
3.33351647469853	-1\\
3.88907203025409	-0.333333333333333\\
3.88907203025408	0.222222222222222\\
3.50326956111828	0.685185185185184\\
2.86026544589194	1.07098765432099\\
2.05651030185901	1.39248971193416\\
1.16344903071133	1.66040809327846\\
0.233176873265815	1.88367341106539\\
-0.348456071407245	2\\
-6.66703294939706	2\\
-7.3614773938415	1.16666666666666\\
}--cycle;

\addplot[area legend, line width=1.0pt, draw=black, fill=white!80!black, forget plot]
table[row sep=crcr] {%
x	y\\
-6.79686621202518	0.472222222222222\\
-6.31461312560542	-0.106481481481483\\
-5.51085798157251	-0.588734567901229\\
-4.50616405153132	-0.990612139917704\\
-4.47487118459034	-1\\
3.18188723311969	-1\\
3.73744278867524	-0.333333333333332\\
3.73744278867524	0.222222222222221\\
3.35164031953944	0.685185185185185\\
2.7086362043131	1.07098765432099\\
1.90488106028018	1.39248971193416\\
1.01181978913247	1.66040809327847\\
0.0815476316869729	1.88367341106539\\
-0.500085312986087	2\\
-6.10242176758073	2\\
-6.79686621202518	1.16666666666667\\
}--cycle;

\addplot[area legend, line width=1.0pt, draw=black, fill=white!80!black, forget plot]
table[row sep=crcr] {%
x	y\\
-6.79686621202518	0.472222222222222\\
-6.31461312560542	-0.106481481481483\\
-5.51085798157251	-0.588734567901229\\
-4.50616405153132	-0.990612139917704\\
-4.47487118459034	-1\\
3.18188723311969	-1\\
3.73744278867524	-0.333333333333332\\
3.73744278867524	0.222222222222221\\
3.35164031953944	0.685185185185185\\
2.7086362043131	1.07098765432099\\
1.90488106028018	1.39248971193416\\
1.01181978913247	1.66040809327847\\
0.0815476316869729	1.88367341106539\\
-0.500085312986087	2\\
-6.10242176758073	2\\
-6.79686621202518	1.16666666666667\\
}--cycle;

\addplot[area legend, line width=1.0pt, draw=black, fill=white!80!black, forget plot]
table[row sep=crcr] {%
x	y\\
-6.79686621202518	0.472222222222222\\
-6.31461312560542	-0.106481481481483\\
-5.51085798157251	-0.588734567901229\\
-4.50616405153132	-0.990612139917704\\
-4.47487118459034	-1\\
3.18188723311969	-1\\
3.73744278867524	-0.333333333333332\\
3.73744278867524	0.222222222222221\\
3.35164031953944	0.685185185185185\\
2.7086362043131	1.07098765432099\\
1.90488106028018	1.39248971193416\\
1.01181978913247	1.66040809327847\\
0.0815476316869729	1.88367341106539\\
-0.500085312986087	2\\
-6.10242176758073	2\\
-6.79686621202518	1.16666666666667\\
}--cycle;

\addplot[area legend, line width=1.0pt, draw=black, fill=white!60!black, forget plot]
table[row sep=crcr] {%
x	y\\
-2.66251777457199	-0.0632329911035556\\
-2.24987456750052	-0.228290273932145\\
-1.791382115199	-0.365838009622598\\
-1.31378581071813	-0.480461122698009\\
-0.836189506237313	-0.575980383594171\\
-0.227895570922772	-0.682847862190093\\
0.141677436165759	-0.740171275995069\\
0.489462868326317	-0.787940787499215\\
0.812457333671697	-0.827748713752715\\
1.10926378135688	-0.860921985630531\\
1.37963948212199	-0.888566378862168\\
1.62415006139222	-0.911603373221822\\
1.84390679019927	-0.930800868521427\\
2.04036899149849	-0.946798781271257\\
2.21519715421179	-0.960130375229356\\
2.3701453411667	-0.971240036861156\\
2.50698387309364	-0.980498088220852\\
2.62744518536987	-0.988213131020855\\
2.73318728087533	-0.994642333354038\\
2.82577041763847	-1.00000000196499\\
2.90548209489566	-1.00000000196499\\
3.19070107962343	-0.657737220291663\\
3.19070107962343	-0.37251823556389\\
2.99263234022913	-0.134835748290725\\
2.66251777457201	0.0632329911035473\\
2.2498745675005	0.228290273932151\\
1.79138211519902	0.365838009622593\\
1.31378581071814	0.480461122698005\\
0.836189506237296	0.575980383594174\\
0.227895570922905	0.68284786219007\\
-0.141677436165856	0.740171275995082\\
-0.489462868326363	0.78794078749922\\
-0.812457333671466	0.827748713752686\\
-1.10926378135743	0.86092198563059\\
-1.37963948212187	0.888566378862158\\
-1.62415006139146	0.911603373221753\\
-1.84390679020009	0.930800868521494\\
-2.04036899149837	0.946798781271248\\
-2.21519715421191	0.960130375229366\\
-2.37014534116662	0.971240036861151\\
-2.50698387309443	0.980498088220907\\
-2.62744518536935	0.988213131020826\\
-2.73318728087414	0.994642333353969\\
-2.82577041763858	1.000000001965\\
-2.90548209489566	1.000000001965\\
-3.19070107962344	0.657737220291658\\
-3.19070107962344	0.372518235563898\\
-2.99263234022913	0.134835748290732\\
}--cycle;

\addplot[area legend, line width=1.0pt, draw=black, fill=white!40!black, forget plot]
table[row sep=crcr] {%
x	y\\
0.999999999954711	-0.999999999954711\\
0.999999999954711	0.999999999954711\\
-0.999999999954711	0.999999999954711\\
-0.999999999954711	-0.999999999954711\\
}--cycle;
\end{axis}
\end{tikzpicture}%
    \vspace{-0.35cm}
    \caption{Sets $\Sigma_{0} = X$ and $\Sigma_{10k}$ with $k \in \N_5$ in thin 
lines; $\Sigma_{60}$ in white with thick lines; $\Omega_\infty^{\sigma}$ in 
light gray; $\Omega_\infty^{\alpha,\sigma}$ in middle shade gray, and $\sigma 
\Omega$ in dark gray, for $N = 15$.}
	\label{fig:Ex3_1}
\end{figure}
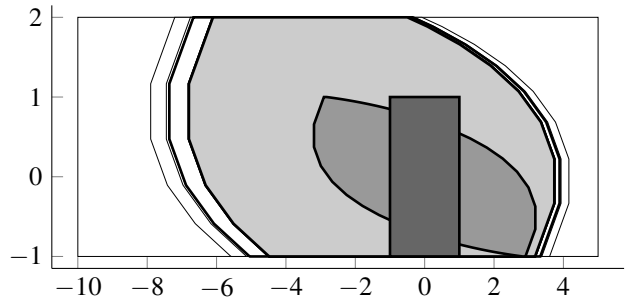

Note how the conservatism with respect to the scaling procedure 
(\ref{eq:LPOinX}) is reduced by taking into account the shape of $X$ as 
in the method based on (\ref{eq:LPOmegaNasXUmu}). The latter, in fact, provides 
a good approximation of the maximal control invariant set for $N = 15$.

\subsection{Example 4: high dimensional system}

We apply now the proposed method to an high dimensional system, in particular 
with $n = 20$ and $m = 10$. To provide some hints on the conservatism of the 
control invariant obtained with respect to the maximal control invariant set, we 
build a system for which the latter can be computed, or, at least, approximated. 
Indeed the classical algorithms for computing or approximating the maximal 
control invariant set are too computationally demanding to be applied to high 
dimensional systems in general. Then, a particular structure has to be imposed 
to the system dynamics to apply them and obtain an estimation of the maximal 
control invariant set to be compared with our results. In particular, we 
consider system (\ref{eq:system}) with 
\begin{equation*}
 A = \left[\begin{array}{cccc}
 A_1 & 0 & \ldots & 0\\
 0& A_2 & \ldots & 0\\
 \ldots & \ldots & \ldots & \ldots \\
 0 & 0 & \ldots & A_{10}\\
 \end{array}\right] \!\!\!, \quad  
 B = \left[\begin{array}{ccccc}
 B_1 & 0 & \ldots & 0\\
 0& B_2 & \ldots & 0\\
 \ldots & \ldots & \ldots & \ldots \\
 0 & 0 & \ldots & B_{10}\\
 \end{array}\right]
\end{equation*}
where $A_i \in \R^{2 \times 2}$ and $B_i \in \R^{2}$, for $i \in \N_{10}$, are 
matrices whose entries are randomly generated such that all $A_i$ have instable 
poles and the pairs $(A_i, B_i)$ are controllable. This means that the whole 
system is controllable and it is, in practice, composed by $10$ decoupled 
two-dimensional subsystems with one control input each. Hence, the maximal 
control invariant set for the overall system, $\Sigma_{\infty}$, is given by the 
Cartesian product of the maximal control invariant sets of the 10 subsystems, 
that is $\textstyle \Sigma_{\infty} = \prod_{i = 1}^{10}\Sigma_{i,\infty}$ 
where $\Sigma_{i,\infty}$ are the maximal control invariant set (or an outer 
approximation of it) for the $i$-th subsystem. Then $\Sigma_{\infty}$ can be 
computed by computing $\Sigma_{i,\infty}$, being $(A_i, B_i)$ a two-dimensional 
controllable system, for all $i \in \N_{10}$. 

The linear problem (\ref{eq:FinalCondLPbeta}) has been posed with $N = 3, 5, 
9, 15$ and solved with YALMIP interface \cite{Lofberg2004} and Mosek optimizer 
\cite{mosek}. In Table~\ref{tb:1}, the dimensions and solution times for the LP 
problems are reported.

\begin{table}[H]
\begin{center}
\begin{tabular}{|c|c|c|c|c|}  
  \hline
   & $N = 3$ & $N = 5$ & $N = 9$ & $N = 15$ \\
  \hline
  LP dimension  & 10002 & 19602 & 48402 & 115602 \\
  Solution time & $0.9s$ & $0.99s$ & $1.25s$ & $1.71s$\\
  \hline
\end{tabular}
\vspace*{0.1cm}
\caption{\label{tb:1}} 
\end{center}
\end{table}

To quantify the difference between the outer approximation of the maximal 
control invariant set $\Sigma_{\infty}$ and the set $\Omega_{\infty}^{\alpha}$, 
$100$ vectors $v \in \R^n$ are generated randomly. Then, (a lower approximation 
of) the maximal values of $r_{\Sigma}$ and $r_{\Omega}$ are computed such that 
$r_{\Sigma} v \in \Sigma_{\infty}$ and $r_{\Omega} v \in 
\Omega_{\infty}^{\alpha}$, through dichotomy method. In practice, we search for 
(approximations of) the intersections between the ray $v_r = \{r v \in \R^n: \ 
r \geq 0\}$ and the boundaries of the sets $\Sigma_{\infty}$ and 
$\Omega_{\infty}^{\alpha}$. The ratio between $r_{\Omega} / r_{\Sigma}$ is an 
indicator of the mismatch between the outer approximation of the maximal 
control invariant set $\Sigma_{\infty}$ and $\Omega_{\infty}^{\alpha}$, the 
closer to one, the closer are the intersections between the ray $v_r$ and the 
two sets.

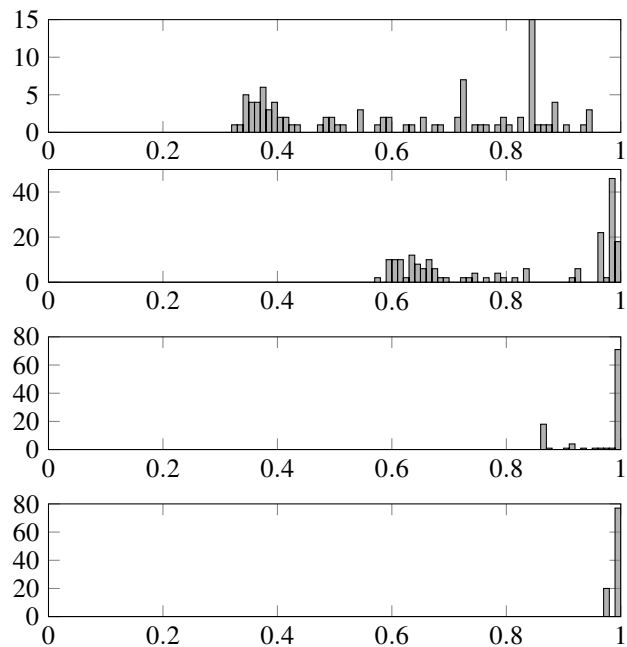
\begin{figure}[H]
    \setlength\fheight{1.5cm}
    \setlength\fwidth{8cm}
%
%
\begin{tikzpicture}

\begin{axis}[%
width=0.951\fwidth,
height=\fheight,
at={(0\fwidth,0\fheight)},
scale only axis,
xmin=0,
xmax=1,
ymin=0,
ymax=15,
axis background/.style={fill=white}
]
\addplot[ybar interval, fill=gray, fill opacity=0.6, draw=black, area legend] table[row sep=crcr] {%
x	y\\
0	0\\
0.01	0\\
0.02	0\\
0.03	0\\
0.04	0\\
0.05	0\\
0.06	0\\
0.07	0\\
0.08	0\\
0.09	0\\
0.1	0\\
0.11	0\\
0.12	0\\
0.13	0\\
0.14	0\\
0.15	0\\
0.16	0\\
0.17	0\\
0.18	0\\
0.19	0\\
0.2	0\\
0.21	0\\
0.22	0\\
0.23	0\\
0.24	0\\
0.25	0\\
0.26	0\\
0.27	0\\
0.28	0\\
0.29	0\\
0.3	0\\
0.31	0\\
0.32	1\\
0.33	1\\
0.34	5\\
0.35	4\\
0.36	4\\
0.37	6\\
0.38	3\\
0.39	4\\
0.4	2\\
0.41	2\\
0.42	1\\
0.43	1\\
0.44	0\\
0.45	0\\
0.46	0\\
0.47	1\\
0.48	2\\
0.49	2\\
0.5	1\\
0.51	1\\
0.52	0\\
0.53	0\\
0.54	3\\
0.55	0\\
0.56	0\\
0.57	1\\
0.58	2\\
0.59	2\\
0.6	0\\
0.61	0\\
0.62	1\\
0.63	1\\
0.64	0\\
0.65	2\\
0.66	0\\
0.67	1\\
0.68	1\\
0.69	0\\
0.7	0\\
0.71	2\\
0.72	7\\
0.73	0\\
0.74	1\\
0.75	1\\
0.76	1\\
0.77	0\\
0.78	1\\
0.79	2\\
0.8	1\\
0.81	0\\
0.82	2\\
0.83	0\\
0.84	15\\
0.85	1\\
0.86	1\\
0.87	1\\
0.88	4\\
0.89	0\\
0.9	1\\
0.91	0\\
0.92	0\\
0.93	1\\
0.94	3\\
0.95	0\\
0.96	0\\
0.97	0\\
0.98	0\\
0.99	0\\
1	0\\
};
\end{axis}
\end{tikzpicture}%
%
%
\definecolor{mycolor1}{rgb}{0.00000,0.44700,0.74100}%
\begin{tikzpicture}

\begin{axis}[%
width=0.951\fwidth,
height=\fheight,
at={(0\fwidth,0\fheight)},
scale only axis,
xmin=0,
xmax=1,
ymin=0,
ymax=50,
axis background/.style={fill=white}
]
\addplot[ybar interval, fill=gray, fill opacity=0.6, draw=black, area legend] 
table[row sep=crcr] {%
x	y\\
0	0\\
0.01	0\\
0.02	0\\
0.03	0\\
0.04	0\\
0.05	0\\
0.06	0\\
0.07	0\\
0.08	0\\
0.09	0\\
0.1	0\\
0.11	0\\
0.12	0\\
0.13	0\\
0.14	0\\
0.15	0\\
0.16	0\\
0.17	0\\
0.18	0\\
0.19	0\\
0.2	0\\
0.21	0\\
0.22	0\\
0.23	0\\
0.24	0\\
0.25	0\\
0.26	0\\
0.27	0\\
0.28	0\\
0.29	0\\
0.3	0\\
0.31	0\\
0.32	0\\
0.33	0\\
0.34	0\\
0.35	0\\
0.36	0\\
0.37	0\\
0.38	0\\
0.39	0\\
0.4	0\\
0.41	0\\
0.42	0\\
0.43	0\\
0.44	0\\
0.45	0\\
0.46	0\\
0.47	0\\
0.48	0\\
0.49	0\\
0.5	0\\
0.51	0\\
0.52	0\\
0.53	0\\
0.54	0\\
0.55	0\\
0.56	0\\
0.57	2\\
0.58	0\\
0.59	10\\
0.6	10\\
0.61	10\\
0.62	2\\
0.63	12\\
0.64	8\\
0.65	6\\
0.66	10\\
0.67	6\\
0.68	2\\
0.69	2\\
0.7	0\\
0.71	0\\
0.72	2\\
0.73	2\\
0.74	4\\
0.75	0\\
0.76	2\\
0.77	0\\
0.78	4\\
0.79	2\\
0.8	0\\
0.81	2\\
0.82	0\\
0.83	6\\
0.84	0\\
0.85	0\\
0.86	0\\
0.87	0\\
0.88	0\\
0.89	0\\
0.9	0\\
0.91	2\\
0.92	6\\
0.93	0\\
0.94	0\\
0.95	0\\
0.96	22\\
0.97	2\\
0.98	46\\
0.99	18\\
1	18\\
};
\end{axis}
\end{tikzpicture}%
%
%
\begin{tikzpicture}

\begin{axis}[%
width=0.951\fwidth,
height=\fheight,
at={(0\fwidth,0\fheight)},
scale only axis,
xmin=0,
xmax=1,
ymin=0,
ymax=80,
axis background/.style={fill=white}
]
\addplot[ybar interval, fill=gray, fill opacity=0.6, draw=black, area legend] table[row sep=crcr] {%
x	y\\
0	0\\
0.01	0\\
0.02	0\\
0.03	0\\
0.04	0\\
0.05	0\\
0.06	0\\
0.07	0\\
0.08	0\\
0.09	0\\
0.1	0\\
0.11	0\\
0.12	0\\
0.13	0\\
0.14	0\\
0.15	0\\
0.16	0\\
0.17	0\\
0.18	0\\
0.19	0\\
0.2	0\\
0.21	0\\
0.22	0\\
0.23	0\\
0.24	0\\
0.25	0\\
0.26	0\\
0.27	0\\
0.28	0\\
0.29	0\\
0.3	0\\
0.31	0\\
0.32	0\\
0.33	0\\
0.34	0\\
0.35	0\\
0.36	0\\
0.37	0\\
0.38	0\\
0.39	0\\
0.4	0\\
0.41	0\\
0.42	0\\
0.43	0\\
0.44	0\\
0.45	0\\
0.46	0\\
0.47	0\\
0.48	0\\
0.49	0\\
0.5	0\\
0.51	0\\
0.52	0\\
0.53	0\\
0.54	0\\
0.55	0\\
0.56	0\\
0.57	0\\
0.58	0\\
0.59	0\\
0.6	0\\
0.61	0\\
0.62	0\\
0.63	0\\
0.64	0\\
0.65	0\\
0.66	0\\
0.67	0\\
0.68	0\\
0.69	0\\
0.7	0\\
0.71	0\\
0.72	0\\
0.73	0\\
0.74	0\\
0.75	0\\
0.76	0\\
0.77	0\\
0.78	0\\
0.79	0\\
0.8	0\\
0.81	0\\
0.82	0\\
0.83	0\\
0.84	0\\
0.85	0\\
0.86	18\\
0.87	1\\
0.88	0\\
0.89	0\\
0.9	1\\
0.91	4\\
0.92	0\\
0.93	1\\
0.94	0\\
0.95	1\\
0.96	1\\
0.97	1\\
0.98	1\\
0.99	71\\
1	71\\
};
\end{axis}
\end{tikzpicture}%
%
%
\begin{tikzpicture}

\begin{axis}[%
width=0.951\fwidth,
height=\fheight,
at={(0\fwidth,0\fheight)},
scale only axis,
xmin=0,
xmax=1,
ymin=0,
ymax=80,
axis background/.style={fill=white}
]
\addplot[ybar interval, fill=gray, fill opacity=0.6, draw=black, area legend] table[row sep=crcr] {%
x	y\\
0	0\\
0.01	0\\
0.02	0\\
0.03	0\\
0.04	0\\
0.05	0\\
0.06	0\\
0.07	0\\
0.08	0\\
0.09	0\\
0.1	0\\
0.11	0\\
0.12	0\\
0.13	0\\
0.14	0\\
0.15	0\\
0.16	0\\
0.17	0\\
0.18	0\\
0.19	0\\
0.2	0\\
0.21	0\\
0.22	0\\
0.23	0\\
0.24	0\\
0.25	0\\
0.26	0\\
0.27	0\\
0.28	0\\
0.29	0\\
0.3	0\\
0.31	0\\
0.32	0\\
0.33	0\\
0.34	0\\
0.35	0\\
0.36	0\\
0.37	0\\
0.38	0\\
0.39	0\\
0.4	0\\
0.41	0\\
0.42	0\\
0.43	0\\
0.44	0\\
0.45	0\\
0.46	0\\
0.47	0\\
0.48	0\\
0.49	0\\
0.5	0\\
0.51	0\\
0.52	0\\
0.53	0\\
0.54	0\\
0.55	0\\
0.56	0\\
0.57	0\\
0.58	0\\
0.59	0\\
0.6	0\\
0.61	0\\
0.62	0\\
0.63	0\\
0.64	0\\
0.65	0\\
0.66	0\\
0.67	0\\
0.68	0\\
0.69	0\\
0.7	0\\
0.71	0\\
0.72	0\\
0.73	0\\
0.74	0\\
0.75	0\\
0.76	0\\
0.77	0\\
0.78	0\\
0.79	0\\
0.8	0\\
0.81	0\\
0.82	0\\
0.83	0\\
0.84	0\\
0.85	0\\
0.86	0\\
0.87	0\\
0.88	0\\
0.89	0\\
0.9	0\\
0.91	0\\
0.92	0\\
0.93	0\\
0.94	0\\
0.95	0\\
0.96	0\\
0.97	20\\
0.98	0\\
0.99	77\\
1	77\\
};
\end{axis}
\end{tikzpicture}%
    \vspace{-0.35cm}
    \caption{Histograms of the values $r_{\Omega} / r_{\Sigma}$ for $N = 3, 5, 
9, 15$ (from top to bottom).}\label{fig:Ex4_1}
\end{figure}

Figure~\ref{fig:Ex4_1} shows the histograms of the ratio $r_{\Omega} / 
r_{\Sigma}$ for the different values of $N$. As expected, the higher is the 
horizon $N$, the closer are the sets $\Sigma_{\infty}$ and 
$\Omega_{\infty}^{\alpha}$.

\section{Conclusions}

In this paper we addressed the problem of computing control invariant sets for 
linear systems with state and input polyhedral constraints. In particular we 
considered the computational complexity inherent to the explicit determination 
of polyhedral one-step sets, that are the basis of many iterative procedures 
for obtaining control invariant sets. Invariance conditions are given, that are 
set inclusions involving the N-step sets, which are posed in form of LP 
optimization problems, instead of Minkowski sum of polyhedra. Then the 
procedures based on those conditions  are applicable even for high dimensional 
systems. 

\bibliography{biblio}
\bibliographystyle{plain}

\balance

\end{document}